\newtheorem{thm}{Theorem}
\newtheorem{cor}[thm]{Corollary} 
\newtheorem{lem}[thm]{Lemma}
\theoremstyle{remark}
\newtheorem{defn}[thm]{Definition}
\newtheorem{conj}[thm]{Conjecture}
\DeclareMathOperator{\cl }{cl}
\DeclareMathOperator{\setint }{int}
\DeclareMathOperator{\A}{Area}
\renewcommand{\S}{S_{\rm gen}}
\newcommand{\hmax}{H_{\rm max}}
\newcommand{\emax}{e_{\rm max}}
\newcommand{\scri}{\mathscr{I}}
\def\z{z}
\newcommand{\hmg}{H_{\rm max,gen}}
\author{Raphael Bousso and Elisa Tabor}
\affiliation{Center for Theoretical Physics and Department of Physics,\\
University of California, Berkeley, California 94720, U.S.A. 
} 
\emailAdd{bousso@berkeley.edu}
\emailAdd{etabor@berkeley.edu}
\title{Simple Holography in General Spacetimes}
\abstract{The simple or ``outermost'' wedge in AdS is the portion of the entanglement wedge that can be reconstructed with sub-exponential effort from CFT data. Here we furnish a definition in arbitrary spacetimes: given an input wedge $a$ analogous to a CFT boundary region, the simple wedge $z(a)$ is the largest wedge accessible by a ``zigzag,'' a certain sequence of antinormal lightsheets. We show that $z(a)$ is a throat, and that it is contained in every other throat. This implies that $z(a)$ is unique; that it is contained in the generalized entanglement wedge; and that it reduces to the AdS prescription as a special case. 

The zigzag explicitly constructs a preferred Cauchy slice that renders the simple wedge accessible from $a$; thus it adds a novel structure even in AdS. So far, no spacelike construction is known to reproduce these results, even in time-symmetric settings. This may have implications for the modeling of holographic encoding by tensor networks.}
\gdef\@fpheader{\mbox{}}
\begin{document}
\maketitle

\section{Introduction}

The AdS/CFT correspondence~\cite{Maldacena:1997re} provides a full nonperturbative description of asymptotically Anti-de Sitter spacetimes from the point of view of an external observer, in terms of a conformal field theory defined on conformal infinity, $\scri$. The CFT Hamiltonian is explicitly known in some cases. In order to understand its implications for the gravitating ``bulk,'' a bulk/boundary dictionary must be used before and after any application of the CFT Hamiltonian. This dictionary is straightforward for bulk operators that approach the boundary~\cite{Gubser:1998bc, Witten:1998qj, Susskind:1998dq, Banks:1998dd}, and thus for all operators that are causally accessible from the boundary~\cite{Hamilton:2005ju,Hamilton:2006az}.

The reconstruction of the deeper bulk from CFT data is nontrivial and exhibits a rich structure. It involves sophisticated concepts such as operator quantum error correction~\cite{Almheiri:2014lwa} and single-shot quantum state merging~\cite{Akers:2019wxj}. But it is, by its nature, a kinematic problem that does not involve the CFT Hamiltonian. Deep bulk reconstruction is based on applications of the gravitational path integral~\cite{Lewkowycz:2013nqa}. It involves geometric constructions formulated in the language of classical or semiclassical gravity. As such, it is amenable to generalizations to other spacetimes. 

The notion of the entanglement wedge~\cite{Ryu:2006bv,Hubeny:2007xt,Faulkner:2013ana,Engelhardt:2014gca}---the bulk subregion reconstructible from a boundary subregion $B$---has already been extended to arbitrary spacetimes. The generalized entanglement wedge, or ``hologram,'' is a map~\cite{Bousso:2022hlz,Bousso:2023sya} that takes a \textit{bulk} region as input and outputs a (generically) larger bulk region. Various nontrivial properties of holograms indicate that the output region can in some sense be reconstructed from the input region, and that its generalized entropy represents a true entropy in a fundamental theory~\cite{Bousso:2022hlz,Bousso:2023sya}.

Other critical aspects of bulk reconstruction, though well understood in AdS, have not yet been studied in arbitrary spacetimes. One important concept is that of the simple (or outermost) wedge of a boundary region $B$. The simple wedge $z(B)$ is the homology region between $B$ and the quantum extermal surface closest to $B$ (and hence, from the bulk point of view, outermost). The simple wedge is known to be unique, to contain the causal wedge of $B$, and to be contained inside the entanglement wedge. 

Because the causal wedge can be smaller, the simple wedge contains some information whose reconstruction is not trivial. But it does not contain all reconstructible information, because the entanglement wedge can be larger. The complement of the simple wedge in the entanglement wedge is called the ``Python's lunch.'' 

There is compelling evidence that the reconstruction of operators in the Python's lunch is computationally hard. It requires an exponential (in certain geometric quantities) number of simple logical operations~\cite{Brown:2019rox}. By contrast, semiclassical operators in the simple wedge can be reconstructed from CFT data with polynomial resources~\cite{Engelhardt:2021mue}; hence the name.\footnote{Already in Refs.~\cite{Engelhardt:2017aux,Engelhardt:2018kcs}, which preceded the notion of the Python's lunch, ``simple entropy'' referred to a coarse-grained CFT state whose entanglement wedge is simply reconstructable in the above sense.}

\begin{figure}[h]
    \centering
    \includegraphics[width=16cm]{./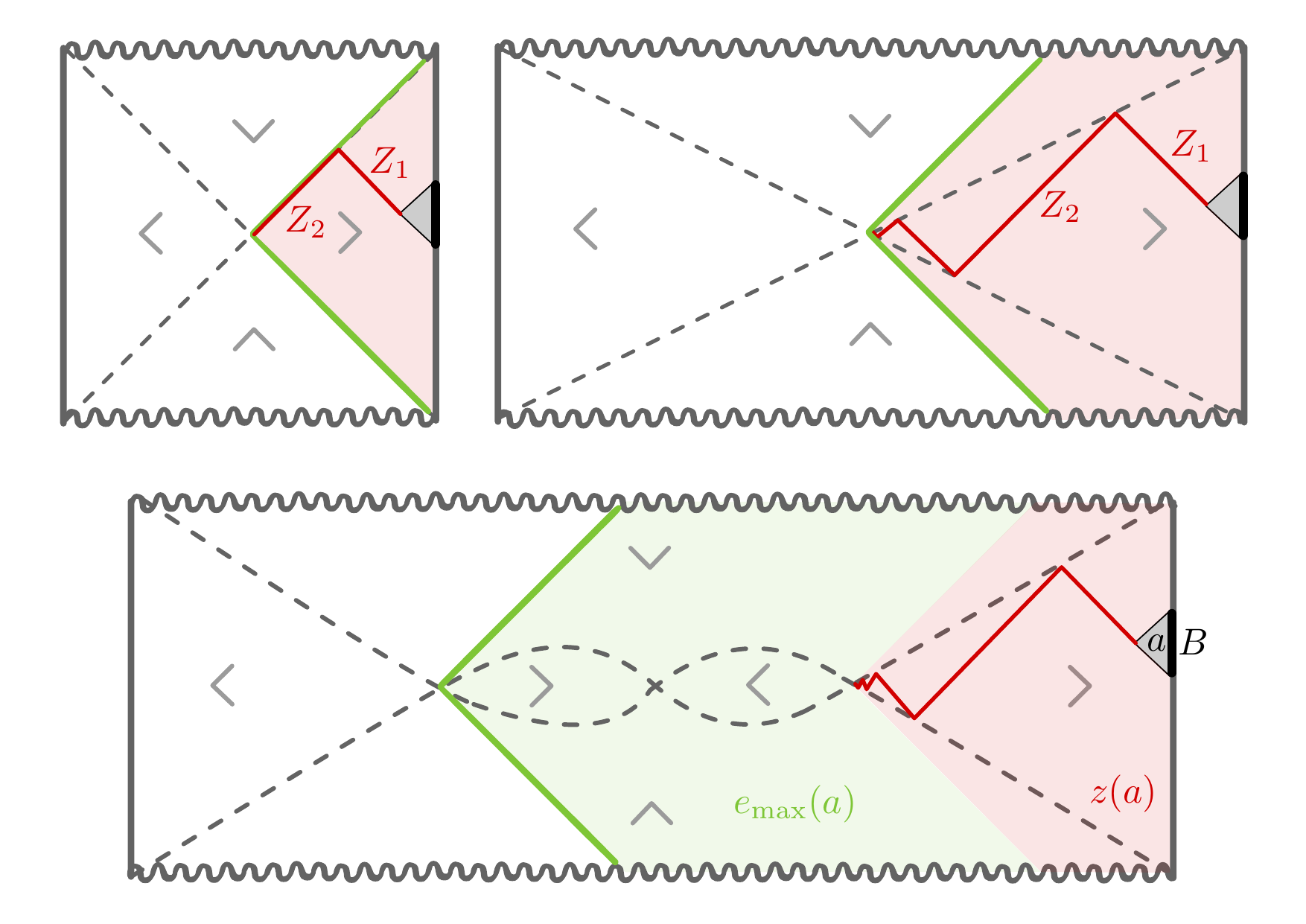}
    \vspace{-.8cm}
    \caption{To start with a familiar setting, each Penrose diagram shows a simple wedge $z(a)$ (shaded red) in AdS. The bulk input region $a$ is shaded grey. The symbols $\wedge$ etc. indicate the null directions in which areas are shrinking~\cite{Bousso:1999cb}. In the top figures, the simple and entanglement wedges agree; the bottom example contains a Python's Lunch. \textit{Top left}: Vacuum Schwarzschild-AdS. The zig $Z_1$ ends at the event horizon; the zag $Z_2$ finds the outermost extremal surface. \textit{Top right}: With generic matter, $Z_n\neq \varnothing$ for all $n$. The zigzag bounces between the future and past apparent horizons (dashed grey), which are PNC and FNC, respectively. \textit{Bottom}: The entanglement wedge is to the right of the solid green lines. Its edge has smaller area than that of $z$, but the zigzag terminates at the outermost throat and thus finds the simple wedge. --- Equivalently (see Sec.~\ref{sec:ads}), each example represents the simple wedge $z(B)$ of a boundary subregion $B$. }
    \label{fig:adszigzag}
\end{figure}

\paragraph{Outline} The purpose of this paper is to provide a definition of the simple wedge in arbitrary spacetimes. We will construct the simple wedge from a sequence of broken lightsheets that we call a zigzag, and we will prove that it obeys the key properties of the simple wedge. In Sec.~\ref{heuristic} we will give a simplified summary of our construction that begins with the classical case before including quantum corrections. We provide several worked examples in figures, for building intuition. In Sec.~\ref{definitions} we provide the definitions and lemmas needed for a rigorous treatment. In Sec.~\ref{full} we present the fully general construction of the simple wedge. We prove its key properties rigorously (by physics standards), and we recover the traditional simple wedge prescription for AdS boundary regions as a special case.

\paragraph{Discussion and Outlook} Our focus in this paper is on the geometric construction of the simple wedge, and on proving the key properties that it is outermost and accessible. In general spacetimes, we do not know the analogue of the CFT, the fundamental theory dual to the semiclassical bulk description. Yet, holograms obey surprising properties such as strong subadditivity of the generalized entropy, which suggest that they capture aspects of states in a true quantum gravity theory. The construction and properties of simple wedges that we establish here suggest that the holographic dictionary will be similar to that of AdS/CFT in even more respects. We hope that this will further constrain and aid our search for the fundamental description of our own universe.

It will be interesting to study whether the interpretation of the simple wedge in terms of computational complexity can be corroborated in general spacetimes. In particular, it would be nice to understand the relation between our zigzag construction and that of Ref.~\cite{Engelhardt:2021mue}. The latter provides an explicit simple recovery protocol. Unlike our construction, it changes the spacetime and is currently defined only perturbatively in general settings, so the prescriptions are not equivalent. 

Turning to the Python's lunch, an important task will be to extend the formula for the exponential reconstruction complexity~\cite{Brown:2019rox,Engelhardt:2023bpv} to general spacetimes. This will require a definition of the relevant ``bulge'' wedges in arbitrary spacetimes. Moreover, at the fully general level of max- and min-reconstruction~\cite{Akers:2019wxj}, no ``Python's Lunch'' prescription for computing the complexity of reconstruction beyond the simple wedge is available even in the original AdS/CFT setting, let alone in our general setting. This remains another key task.

It will also be interesting to explore the implications of our construction for tensor networks~\cite{Swingle:2009bg}. So far, tensor networks have mostly been viewed as discretized versions of the classical geometry of a time-reflection symmetric Cauchy slice. The zigzag slices we construct always leave the time-reflection symmetric slice even if one exists, and this appears to be essential (see Fig.~\ref{fig:horse}). This suggests that perhaps tensor networks should be more broadly viewed as models of broken null hypersurfaces. It will be interesting to investigate whether mean curvature flow~\cite{Nomura:2018kji,Nomura:2018kji} can be used to construct spacelike Cauchy slices that render the simple wedge accessible.\footnote{We would like to thank Guanda Lin and Pratik Rath for an initial exploration of this question.}

\section{Heuristic Summary and Examples}
\label{heuristic}

In this section, we provide a slightly simplified version of the zigzag construction of the simple wedge, and we describe the key properties of the simple wedge. We also provide several instructive examples. We will not provide proofs or overly rigorous definitions; the goal is to offer a first introduction. The reader interested in more rigor is encouraged to study the remaining sections.

\subsection{Classical Simple Wedge in Arbitrary Spacetimes} Recall that a \emph{lightsheet} is a null hypersurface generated by nonexpanding null geodesics orthogonal to a surface~\cite{Bousso:1999cb}. A \emph{wedge} is the full causal development of an open spatial region. A wedge is called \emph{antinormal} on some portion of its edge if both the past and future outgoing orthogonal null congruences are nonexpanding. For example, the spatial exterior of a round sphere in Minkowski space is an antinormal wedge. (Intuitively, the antinormal property indicates that a wedge holographically encodes at least the infinitesimally nearby regions in the antinormal direction.)

We consider a spacetime $M$ that satisfies the classical Einstein equations with matter obeying the Null Energy Condition~\cite{Wald:1984rg}. The starting point of our construction is an arbitrary ``input wedge'' $a\subset M$. The role of $a$ is analogous to a boundary subregion $B$ in AdS, in that we will construct a simple wedge pertaining to $a$.

\begin{figure}[h]
    \centering
    \includegraphics[width=15cm]{./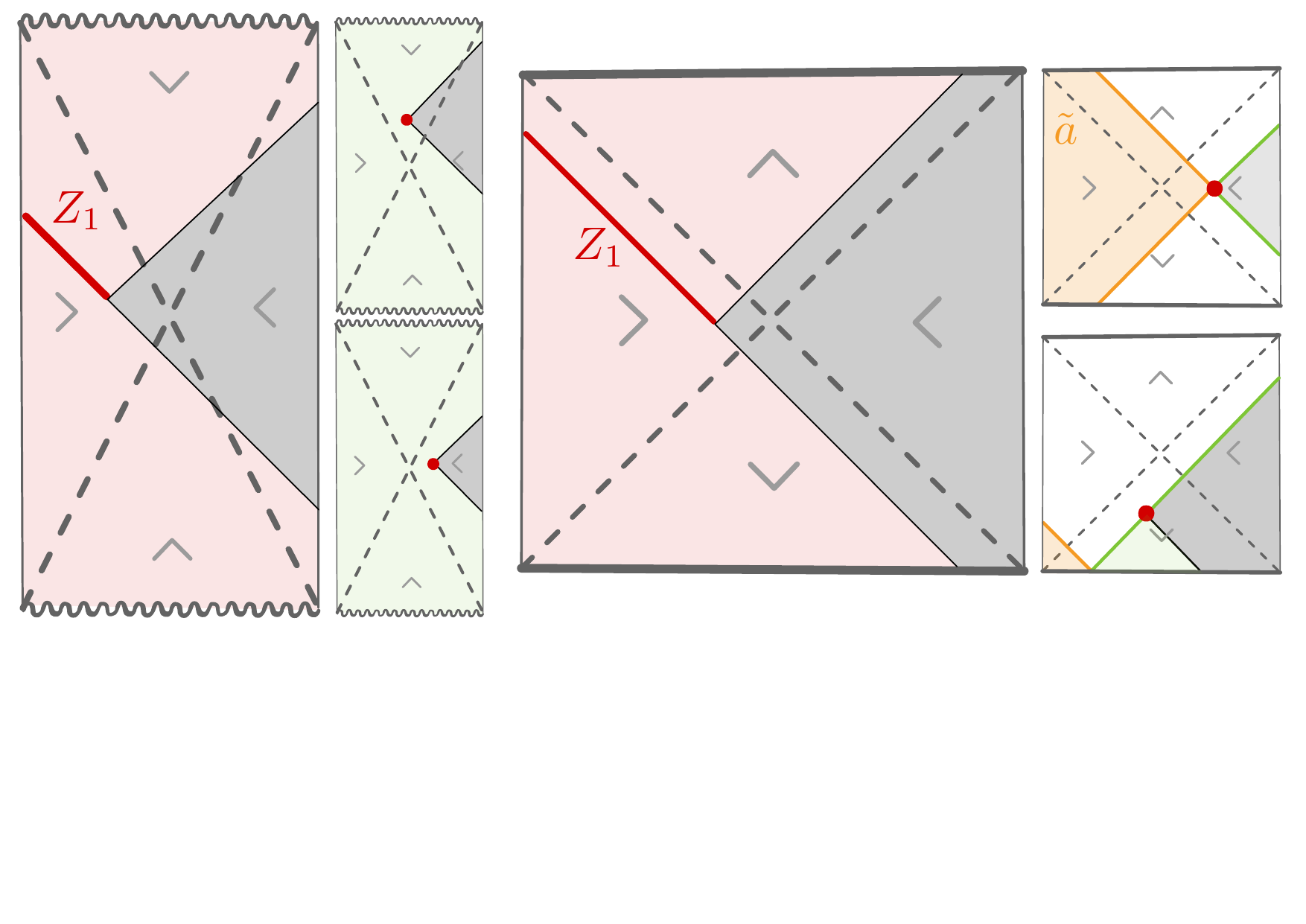}
    \vspace{-3.4cm}
    \caption{Simple wedges in spatially closed universes. The input wedge $a$ is shaded grey, the simple wedge $z(a)$ red, and the max-hologram $\emax(a)$ green. \textit{Left group}: FRLW universe filled with pressureless dust. In all three panels,  the fundamental complement $\tilde a$ is empty and $\emax(a)=M$. In the large figure, $z(a)=M$ is reached by a single zig. In the small figures, $z(a)=a$, and the rest of $M$ is a Python's lunch. \textit{Right group}: vacuum de Sitter space. In the large figure, $\tilde a=\varnothing$ and $\emax(a)=z(a)=M$. In both small figures, $\tilde a\neq \varnothing$ (orange); $z(a)=a$, and $\emax(a)=\tilde a'$. In the top example, $\emax(a)=z(a)$; at the bottom, $\emax(a)$ has a lunch.}
    \label{fig:closed} 
\end{figure}

To construct the first \emph{zig}, $z_1$, we follow the future lightsheet $Z_1$ of $a$ until it ceases to be antinormal; see Fig.~\ref{fig:adszigzag} for an example. If $a$ is nowhere antinormal (as in all of the small panels in Fig.~\ref{fig:closed}), then $z_1=a$. If $a$ is antinormal only on a subset of its edge (as in Fig.~\ref{fig:horse}), then $z_1$ ``grows out'' only from that subset. The zig $z_1$ is defined to be the smallest wedge that contains both $a$ and the lightsheet $Z_1$, i.e., $z_1$ is the causal development of $a\cup Z_1$.

To understand how far $Z_1$ extends, we have indicated the apparent horizons in Fig.~\ref{fig:adszigzag} (dashed lines). Beyond the future apparent horizon, spheres are trapped, so the lightsheet $Z_1$ would fail to be antinormal if it were extended into that region. We must stop at the apparent horizon, where the past directed null expansion vanishes. In Fig.~\ref{fig:closed} (top left), by contrast, the zig extends across the entire complement of $a$. Thus $z_1=M$, and all further steps will be trivial. In Fig.~\ref{fig:horse}, as in Fig.~\ref{fig:adszigzag}, $Z_1$ stops at the first\footnote{This requires a rigorous definition. Roughly, $Z_1$ must not contain any future lightsheet of $a$ that is everywhere past-expanding. See Def.~\ref{def:zig} for details.} surface where the past-directed null congruence has vanishing expansion.

\begin{figure}[h]
    \centering
    \includegraphics[width=16cm]{./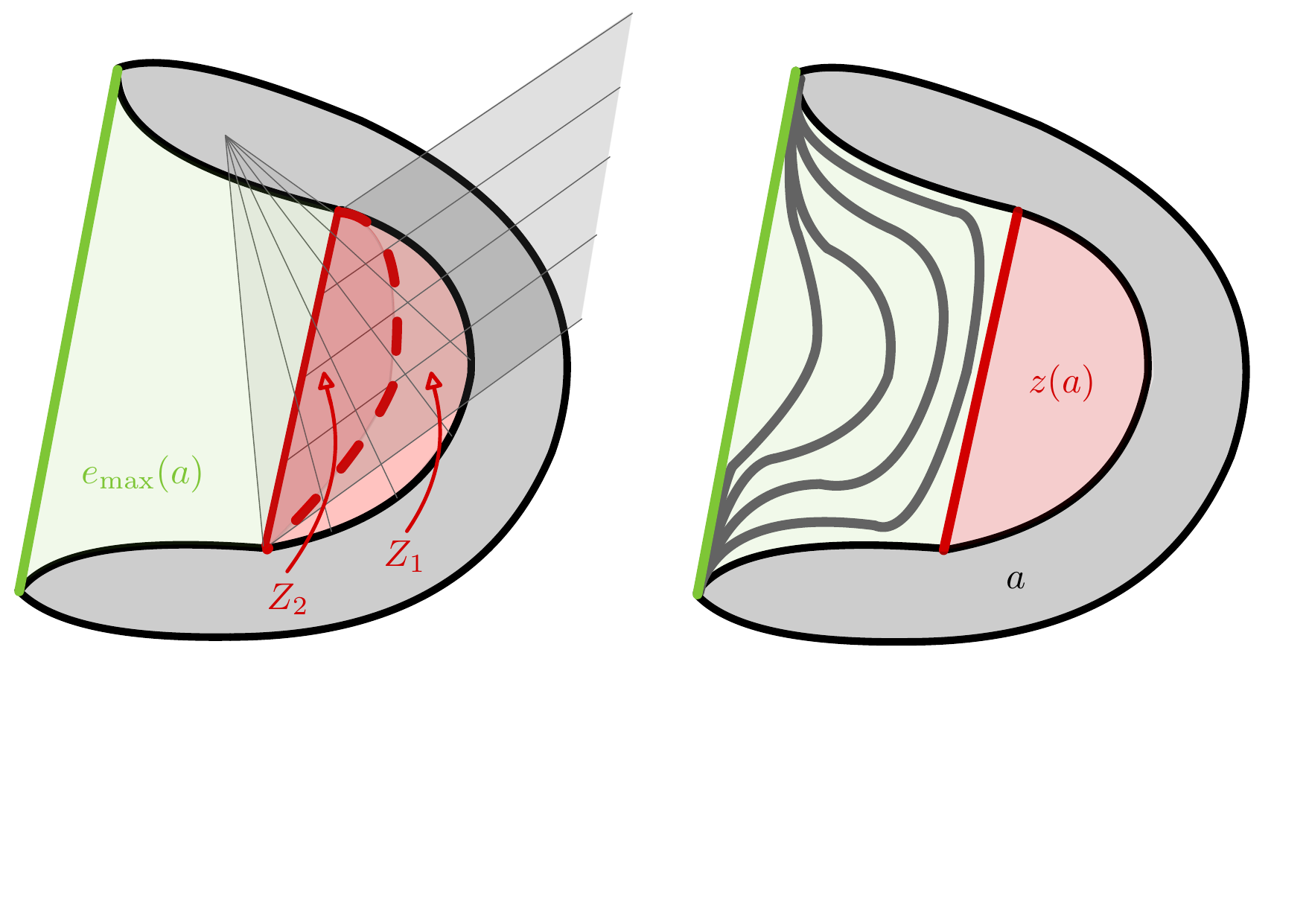}
    \vspace{-3.8cm}
    \caption{Simple wedge $z(a)$ (red edge) and max-entanglement wedge $\emax(a)$ (green edge) of a horseshoe-shaped input wedge $a$ in 2+1 dimensional Minkowski space. \textit{Left}: zigzag construction of $z(a)$. $Z_1$ is a portion of a light cone; it ends on the red dashed line, which is PNC. $Z_2$ is a portion of a null plane that connects this to the extremal red line. No lightsheets begin at the convex (expanding) portions of the horseshoe. \textit{Right}: Although the horseshoe lies on a time-reflection symmetric Cauchy slice $\Sigma$, we have not found a viable construction of $z(a)$ purely on $\Sigma$. In particular, the presence of a lunch is obscured by the existence of a spacelike foliation on $\Sigma$ (grey lines) with monotonically decreasing area, which interpolates from the red to the green edge.}
    \label{fig:horse}
\end{figure}

Having constructed the zig, we now iterate, switching future and past at every step. That is, we follow $Z_2$, the past-directed lightsheet of $z_1$, until it fails to be antinormal. See Figures~\ref{fig:adszigzag} and \ref{fig:horse} for nontrivial examples. The resulting region (including $z_1$) is called the \emph{zag} of $z_1$ and is labeled $z_2$. Continuing onward, alternating between zigs and zags, we construct ever growing regions, $z_3, z_4, \ldots$. We call $z_n$ the $n$-th \emph{zigzag} of $a$. Our construction equips $z_n$ (really, just the portion exterior to $a$) with a preferred Cauchy slice, $Z_1\cup Z_2 \cup \ldots \cup Z_n$.

The \emph{simple wedge} $z$ is the $n\to\infty$ limit of the zigzag. Our Corollary~\ref{cor:zprop} reduces, in the classical setting, to the proof that $z$ is a stationary surface for the area functional\footnote{In a Lorentzian geometry, no surface has locally minimal or maximal area, so none is extremal. Nevertheless we will use ``extremal'' instead of ``stationary'' below, since this terminology is widely used.} (i.e., has vanishing past and future expansions) where its edge differs from that of $a$. Theorem~\ref{zthm:zigoutermost} reduces to the proof that $z$ is contained inside all other wedges with this property. These two results establish that $z$ is properly called the simple or outermost wedge.

\subsection{Quantum Extensions}

The Quantum Extremal Surface (QES) prescription~\cite{Engelhardt:2014gca} replaces the area by the generalized entropy $\S$ (the Bekenstein-Hawking entropy of the edge plus the von Neumann entropy of the matter fields in the wedge). This evades the need for assuming the Null Energy Condition (which is false in nature), and it is vital for the correct treatment of semiclassical phenomena such as black hole evaporation. The improvement is analogous to replacing Hawking's area theorem~\cite{Hawking:1971vc} by the Generalized Second Law~\cite{Bekenstein:1972tm}, or the classical focusing of lightrays by the Quantum Focusing Conjecture~\cite{Bousso:2015mna}.  

To implement this substitution, our definition of the simple wedge follows exactly the same steps as outlined above, with the classical expansion replaced by the quantum expansion (a functional derivative of $\S$). In proofs, the Null Energy Condition replaced by the Quantum Focusing Conjecture. An interesting simple wedge that illustrates this regime is constructed in Fig.~\ref{fig:island}, for an evaporating black hole.

The QES prescription was further refined~\cite{Akers:2019wxj} by replacing the von Neumann entropy by the smooth conditional max entropy~\cite{RenWol04a}, $\hmax^\epsilon(b|a)$. This quantity involves two nested wedges $b\supset a$. Unlike the conditional von Neumann entropy, it cannot be expressed as a difference of two (unconditional) entropies associated to $b$ and $a$ separately. Therefore, the max-entanglement wedge cannot be defined as the stationary surface of a single functional such as $\S$. Its construction takes a substantially different form~\cite{Akers:2023fqr,Bousso:2023sya}, from which the old QES prescription can nevertheless be recovered when von Neumann entropies provide a good approximation. (A larger ``min''-entanglement wedge can also be defined, though we will not consider it here.) At the most general setting of holograms (entanglement wedges in arbitrary spacetimes), one finds that the max- and min-hologram can already differ at the classical level~\cite{Bousso:2023sya}. 

\begin{figure}[h]
    \centering
    \includegraphics[width=12cm]{./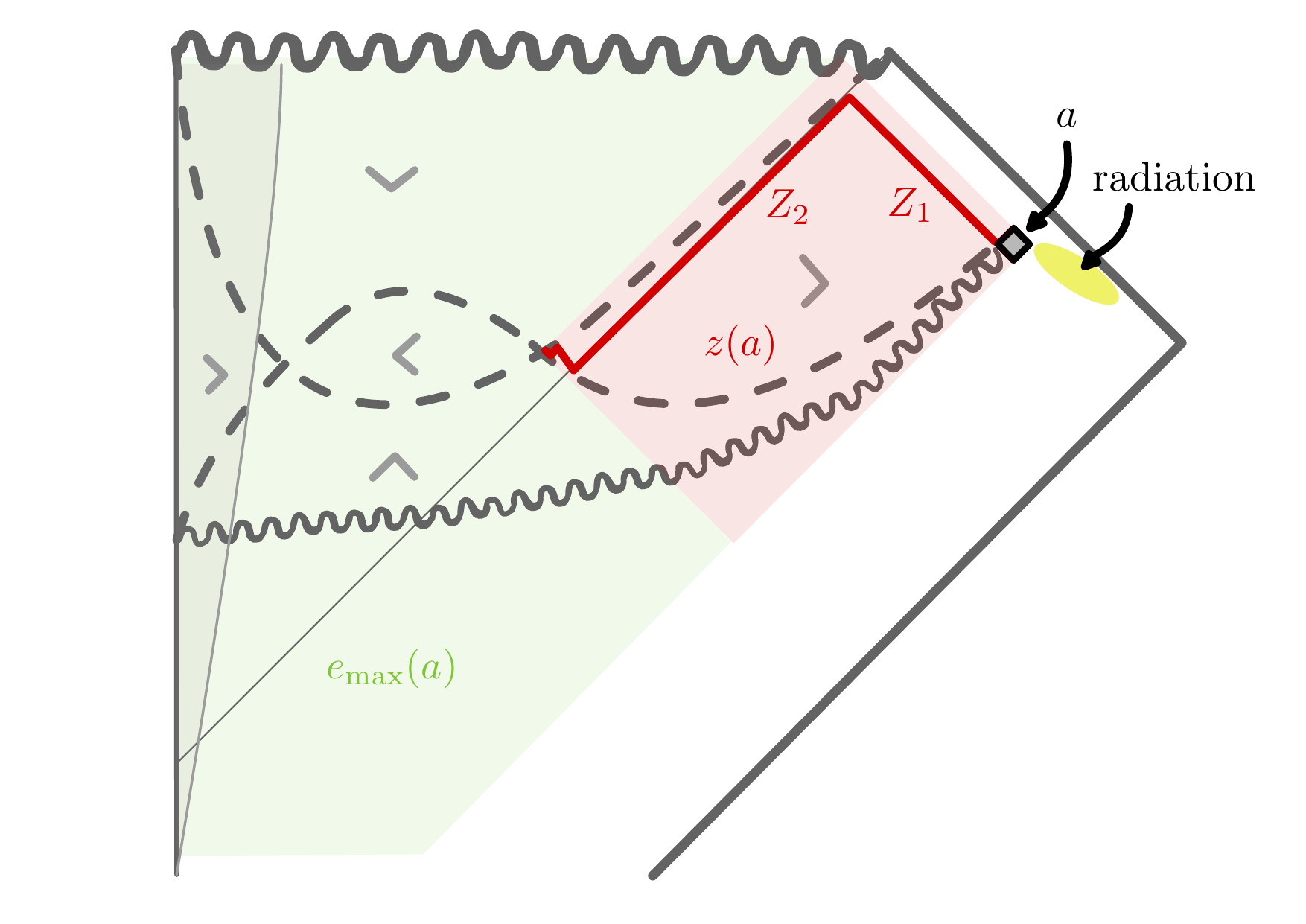}
    \vspace{-.5cm}
    \caption{Penrose diagram of an evaporating black hole~\cite{Bousso:2022tdb,Bousso:2025xyc}. (The symbols $\wedge$ etc. now indicate the null directions in which the generalized entropy $\S$ is shrinking.) The input wedge $a$ is a thin shell located at 10 times the Schwarzschild radius; nearly all of the Hawking radiation emitted so far has propagated to larger radius. The simple wedge extends to a quantum extremal surface located near the black hole horizon~\cite{Penington:2019npb,Almheiri:2019hni}. Before the Page time, the entanglement wedge includes the black hole interior as a Python's lunch; this is the case shown here. After the Page time, $\emax(a)=z(a)$.}
    \label{fig:island}
\end{figure}

Full generality also requires us to allow regions bounded by corners or folds. Corners arise generically at caustics of null congruences, and when considering unions or intersections of regions.\footnote{To see the importance of this level of generality, consider the smooth ($C^\infty$) horseshoe-shaped input region in Fig.~\ref{fig:horse}. Its simple wedge has corners: the null normal vectors are discontinuous. Its entanglement wedge has no corners but discontinuous null expansions.} This leads to a minimal, more robust structure, in which quantitative null expansions are abandoned.  Nonexpansion and noncontraction become qualitative properties, and the Quantum Focusing Conjecture~\cite{Bousso:2015mna} is replaced by a pared-down version, Discrete Max Focusing~\cite{Bousso:2024iry}. Discrete nonexpansion still allows for key definitions, such as the concept of antinormal; and Discrete Max Focusing suffices in proofs.

Working at this level of generality, the appropriate generalization of the notion of quantum extremal surface is the statement that the max-hologram $\emax(a)$ is a ``throat accessible from $a$.''  A throat is an antinormal wedge that cannot be enlarged while staying antinormal. Accessibility from $a$ relaxes the antinormal requirement to apply only to edge portions of the throat that differ from the edge of $a$. But accessibility adds a new condition: that the portion of the throat wedge outside of $a$ admits a partial Cauchy slice such that the generalized max entropy of the throat conditioned on any intermediate wedge whose edge lies on the slice is negative. (Classically, this reverts to the statement that all intermediate surfaces on the Cauchy slice have larger area than the throat.)

Our definition of the simple wedge $z$ is fully general in that it incorporates all of the above refinements and generalizations. This requires the apparatus of definitions and lemmas developed in the following section.


\section{Preliminary Definitions and Lemmas}
\label{definitions}

Here we reproduce a number of standard definitions, fix notation, and derive some Lemmas.

\subsection{Wedges and Causal Structure}
\label{sec:wedges}

Let $M$ be a globally hyperbolic Lorentzian spacetime with metric $g$. The chronological and causal future and past, $I^\pm$ and $J^\pm$, and the unphysical spacetime $M\cup\partial M$ with conformal boundary $\partial M$ are defined as in Wald~\cite{Wald:1984rg}.

A proper subset will be denoted by $\subsetneq$; $\subset$ permits equality. For $s\subset M$, $\setint s$, $\cl s$, and $\partial s$ denote the interior, the closure, and the boundary of $s$. All operations are performed in $M$ unless explicitly stated otherwise; in those cases we will denote the relevant set by a subscript. 

\begin{defn}\label{zdef:sc}
The \emph{spacelike complement} of a set $s\subset M$ is
\begin{equation}\label{zeq:sc}
    s'=M\setminus \cl[I(s)]~.
\end{equation}
(Thus, $s'$ is necessarily open.)
\end{defn}

\begin{defn}\label{zdef:covwedge}
A {\em wedge} is a set $a\subset M$ that satisfies $a=a''$.  (Thus $a$ is an open set; $a'$ is a wedge; and the intersection of two wedges $a,b$ is a wedge~\cite{Bousso:2022hlz, Bousso:2023sya}.)
\end{defn}

\begin{defn}\label{zdef:wedgeunion}
The {\em wedge union} of two wedges $a,b$ is the wedge
\begin{equation}
    a\Cup b\equiv (a'\cap b')'~.
\end{equation}
\end{defn}

\begin{defn}\label{zdef:edgehor}
The \emph{edge} $\eth a$ and \emph{Cauchy horizons} $H^\pm(a)$ of a wedge $a$ are
\begin{align}
    \eth a & \equiv \partial a \setminus I(a)~,\\
    H^+(a) & \equiv \partial a\cap I^+(a)~,\\
    H^-(a) & \equiv \partial a\cap I^-(a)~,\\
    H(a) & \equiv \partial a\cap I(a) = H^+(a)\cup H^-(a)~.
\end{align}
\end{defn}

\begin{defn}[Null Infinity]
    \emph{Future infinity}, $\scri^+$, is the subset of $\partial M$ consisting of the future endpoints in $\bar M$ of null geodesics of future-infinite affine length in $M$. \emph{Past infinity}, $\scri^-$, is defined similarly. \emph{Null infinity} is their union:
    \begin{equation}
        \scri \equiv \scri^+ \cup \scri^-~.
    \end{equation}
\end{defn}

We adopt the following two definitions from Ref.~\cite{BoussoKaya}:
\begin{defn}
The \emph{conformal shadow} $\tilde \scri$ of a wedge $a$ is the subset of $\scri$ that is causally inaccessible from $a$: 
\begin{equation}
    \tilde \scri(a) \equiv  \scri \setminus [\cl I(a)]_{\bar M}~ .
\end{equation}
The subscript tells us that $\cl I(a)$ should be computed in $\bar M\equiv M\cup \scri$. 
\end{defn}

\begin{defn}\label{def:tildea}
The \emph{fundamental complement} $\tilde a$ of a wedge $a$ is the double complement of the conformal shadow of $a$. More precisely, 
\begin{equation}
    \tilde a \equiv  (\tilde \scri(a)'_{\bar M}\cap M)'~.
\end{equation}
\end{defn}

\subsection{Discrete Nonexpansion, Accessibility, Holograms, and Lightsheets}

The following treatment is somewhat simplified in order to avoid distractions. We will not display the smoothing parameter $\epsilon$ in the smooth conditional max- and min-entropies. In generalized entropies, we treat the area terms as separable from the matter entropy terms, and we omit terms of order $G$ and higher. For more careful definitions of smoothing and of the generalized max-entropy, see Ref.~\cite{Akers:2023fqr}. For a discussion of the $G\hbar$ expansion, we refer the reader to Refs.~\cite{Shahbazi-Moghaddam:2022hbw, Bousso:2024iry}.

The central quantitative object we consider is the generalized smooth conditional max-entropy of two wedges,
\begin{equation}\label{eq:hmgdef}
    \hmg(b|a) \approx \left[ \frac{\A(b)-\A(a)}{4G} + \hmax(b|a) + O(G)\right]~.
\end{equation}
Here $\A(a)$ is the area of $\eth a$, and $\hmax(b|a)$ is the smooth max-entropy of the quantum fields in $b$ conditioned on $a$~\cite{RenWol04a}. For compressible quantum states (which are often considered), this becomes a difference of von Neumann entropies:
\begin{equation}
    \hmax(b|a) \approx S(b|a) \equiv S(b)-S(a)~.
\end{equation}
In this approximation,
\begin{equation}
    \hmg(b|a) \approx \S(b|a) \equiv \S(b)-\S(a)~,
\end{equation}
where 
\begin{equation}
    \S(a)\equiv \frac{\A(a)}{4G} + S(a) + O(G)
\end{equation}
is the generalized entropy~\cite{Bekenstein:1972tm} of the wedge $a$.
In many interesting settings, we will be able to neglect the contribution of bulk quantum fields entirely and keep only the area terms.

\begin{thm}\label{zconj:SSA}
    $\hmg$ satisfies \emph{strong subadditivity}: let $a,b,c$ be bulk subregions such that $a\supseteq b,c$ and $a\cap b'\subseteq c$. Then
    \begin{equation}\label{eq:SSA}
        \hmg(a|c)\leq\hmg(b|c\cap b)\,.
    \end{equation}
    Here we will only need a weaker statement, \emph{Discrete Subadditivity}~\cite{Bousso:2024iry}:
    \begin{equation}\label{eq:SSA}
        \hmg(b|c\cap b)\leq 0 \implies \hmg(a|c)\leq 0~.
    \end{equation}
\end{thm}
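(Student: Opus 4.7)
The plan is to prove the main inequality $\hmg(a|c)\leq \hmg(b|c\cap b)$ first; the weaker Discrete Subadditivity is then immediate, since $\hmg(b|c\cap b)\leq 0$ together with the main bound forces $\hmg(a|c)\leq 0$.

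The first step is to use \eqref{eq:hmgdef} to split the claim into a purely geometric part and a purely quantum-field-theoretic part:
\begin{equation}
\A(a)+\A(c\cap b)\leq \A(b)+\A(c) \qquad\text{and}\qquad \hmax(a|c)\leq \hmax(b|c\cap b).
\end{equation}
The hypotheses $a\supseteq b,c$ and $a\cap b'\subseteq c$ are precisely the wedge form of the standard SSA setup: $b$ and $c$ together ``cover'' $a$, and $b\cap c$ is their overlap. For the matter piece, these containments force the corresponding containment of algebras on any Cauchy slice threading $\eth a$, $\eth b$, $\eth c$, and $\eth(b\cap c)$, so standard strong subadditivity (data processing) of the smooth conditional max entropy~\cite{RenWol04a} applies. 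For the area piece, I would invoke a maximin-style argument: construct a common Cauchy slice through these four edges on which $\eth a$ and $\eth(b\cap c)$ appear as the union and intersection of $\eth b$ and $\eth c$, and then use the classical focusing theorem to compare the areas of extremal and intermediate representatives.

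The main obstacle is that our edges are allowed to have corners and folds, so neither quantitative areas nor the classical focusing theorem are available in the generality required. The replacement is Discrete Max Focusing of Ref.~\cite{Bousso:2024iry}, which yields qualitative conclusions ``$\hmg(\cdot|\cdot)\leq 0$ across any discrete nonexpanding step'' in place of quantitative area monotonicity. Working entirely at this sign level, the two separate inequalities above merge into the Discrete Subadditivity statement directly: one chains sign bounds along wedges linked by antinormal lightsheets interpolating between $c\cap b$, $b$, $c$, and $a$, and concatenation gives $\hmg(a|c)\leq 0$ whenever $\hmg(b|c\cap b)\leq 0$. This is exactly the weaker form of \eqref{eq:SSA} actually used in the sequel, and is what Ref.~\cite{Bousso:2024iry} establishes in the discrete framework.
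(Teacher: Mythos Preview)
Your splitting into an area piece and a matter piece is exactly what the paper does, and your treatment of the matter piece---invoking strong subadditivity/data processing for the smooth conditional max entropy---is correct and matches the paper's citation of the standard results. The problems lie in the area piece and, more seriously, in your second paragraph.

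For the area inequality $\A(a)+\A(c\cap b)\leq \A(b)+\A(c)$, no maximin construction or focusing theorem is needed or appropriate. Under the stated containments, the edges $\eth a$ and $\eth(b\cap c)$ are assembled from pieces of $\eth b$ and $\eth c$; the inequality is then elementary submodularity of area under union and intersection (generically an equality). This is purely kinematic. The paper accordingly just says the area terms ``satisfy the corresponding properties'' and moves on. Your invocation of ``extremal and intermediate representatives'' and the focusing theorem suggests you are conflating this with an entanglement-wedge or maximin argument, which is a different kind of statement.

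The second paragraph contains a genuine gap. You assert that Discrete Subadditivity can be obtained from Discrete Max-Focusing by ``chaining sign bounds along wedges linked by antinormal lightsheets interpolating between $c\cap b$, $b$, $c$, and $a$.'' But the wedges in Theorem~\ref{zconj:SSA} are \emph{arbitrary} subject only to the containment hypotheses; there is no reason their edges are connected by any lightsheet, let alone an antinormal one. (Take $b\subset c$ round balls in Minkowski space: the outward congruence from $\eth b$ is expanding, so no lightsheet runs from $b$ to $c$, yet SSA must still hold.) Discrete Max-Focusing (Conj.~\ref{conj:qfc}) is a dynamical statement about null deformations; strong subadditivity is an information-theoretic inequality valid for all configurations. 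They are logically independent inputs, and indeed the paper invokes both separately in the proof of Theorem~\ref{zthm:accessible}. Finally, the corners-and-folds concern is a red herring here: areas remain well-defined at corners; it is only numerical expansions that get replaced by discrete notions, and expansions play no role in the proof of SSA.
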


\begin{thm}\label{zthm:chainrule}
    $\hmg$ also satisfies a \emph{chain rule}: Let $a,b,c$ be bulk subregions such that $a\supset b\supset c$. Then
    \begin{equation}\label{eq:chainrule}
        \hmg(a|c) \leq\hmg(a|b)+\hmg(b|c)\,.
    \end{equation}
\end{thm}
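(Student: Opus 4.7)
The plan is to split $\hmg$ into its area piece and its matter piece and treat them separately, since Eq.~\eqref{eq:hmgdef} makes the decomposition additive up to the higher-order terms we have agreed to drop. First I would observe that the area contribution obeys the chain rule with equality: for any three regions,
\begin{equation*}
\frac{\A(a)-\A(c)}{4G} \;=\; \frac{\A(a)-\A(b)}{4G} \;+\; \frac{\A(b)-\A(c)}{4G}~,
\end{equation*}
so this piece contributes the same amount to both sides of \eqref{eq:chainrule}. The inequality therefore reduces to a statement about the matter conditional max-entropies alone.

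Next I would identify the matter Hilbert-space factorization that matches the nested wedges. Because $a\supset b\supset c$, one can write $a$ as the disjoint union (up to measure-zero boundaries) of $A\equiv a\setminus b$, $B\equiv b\setminus c$, and $C\equiv c$. Under this factorization
\begin{equation*}
\hmax(a|c)=\hmax(AB|C),\quad \hmax(a|b)=\hmax(A|BC),\quad \hmax(b|c)=\hmax(B|C)~,
\end{equation*}
so the desired inequality is precisely the standard chain rule $\hmax(AB|C)\leq \hmax(A|BC)+\hmax(B|C)$ for the smooth conditional max-entropy, which is a known quantum-information identity (Vitanov--Dupuis--Tomamichel--Renner and references therein). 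I would invoke this result directly rather than reprove it, noting that the smoothing parameter $\epsilon$ suppressed here corresponds to a mild rescaling of the smoothing on the right-hand side, consistent with the conventions declared at the start of Sec.~\ref{definitions}.

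Combining the area equality with the matter-entropy chain rule would yield \eqref{eq:chainrule} up to the $O(G)$ remainder already absorbed into the definition of $\hmg$. The main obstacle in making this rigorous is the same one that appears throughout this framework: ensuring that the matter Hilbert space really does factorize across the edges $\eth b$ and $\eth c$, so that the conditional max-entropies on the bulk side are well defined. Since we have already committed to the smoothed, coarse-grained treatment of Ref.~\cite{Akers:2023fqr}, this factorization is accommodated by the same $\epsilon$-regularization that underwrites Theorem~\ref{zconj:SSA}. I do not expect new subtleties beyond those; once the smoothing conventions are fixed, the proof is essentially an application of the quantum chain rule plus a telescoping sum on areas.
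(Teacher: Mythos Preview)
Your proposal is correct and matches the paper's own argument essentially line for line: split $\hmg$ via Eq.~\eqref{eq:hmgdef} into the area piece (which telescopes, giving the chain rule with equality) and the matter smooth conditional max-entropy (which obeys the chain rule by the Vitanov--Dupuis--Tomamichel--Renner result), then add. The paper likewise notes that at higher orders in $G$ the statement must be treated as a conjecture, which is consistent with your remark about the $O(G)$ remainder.
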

At the level of approximation of Eq.~\eqref{eq:hmgdef}, both theorems follow immediately from the fact that the areas and the conditional matter entropies satisfy the corresponding properties~\cite{Konig_2009, Vitanov_2013}. Eq.~\eqref{eq:SSA} was proven without assuming the separability of the area term in Ref.~\cite{Akers:2023fqr} (Proposition 3.13). At higher orders in $G$, the theorems must be re-stated as conjectures~\cite{Bousso:2024iry}.

\begin{defn}[Nonexpanding wedges]
    A wedge $a$ is said to be \emph{future-nonexpanding (FNE)}~\cite{Bousso:2024iry} at $p\in\eth a$ if there exists an open set $O$ containing $p$ such that 
    \begin{align}\label{eq:futnonexp}
        \hmg(b|a) \leq 0 \text{~for~all~wedges~}b\supset a\text{~such~that~} 
        \eth b \subset \eth a \cup [H^+(a')\cap O]~.
    \end{align}
    To define \emph{past-nonexpanding (PNE)}, $H^-$ replaces $H^+$ in Eq.~\eqref{eq:futnonexp}. The wedge $a$ is said to be \emph{antinormal} at $p\in\eth a$ if $a$ is FNE and PNE at $p$.
\end{defn}

See Figure~\ref{fig:ads} and Appendix~\ref{app:examples} for an illustration of nonexpansion and of other concepts introduced below. When a numerical outward future quantum expansion is well-defined, FNE implies its nonpositivity; moreover, negative quantum expansion implies FNE~\cite{Bousso:2024iry}. However, we shall not require numerical values here, and the above definition is superior because it can be applied to non-smooth edges (which arise generically).

\begin{figure}[t]
    \vspace{-1cm}
    \hspace{-1cm}
    \includegraphics[width=18cm]{./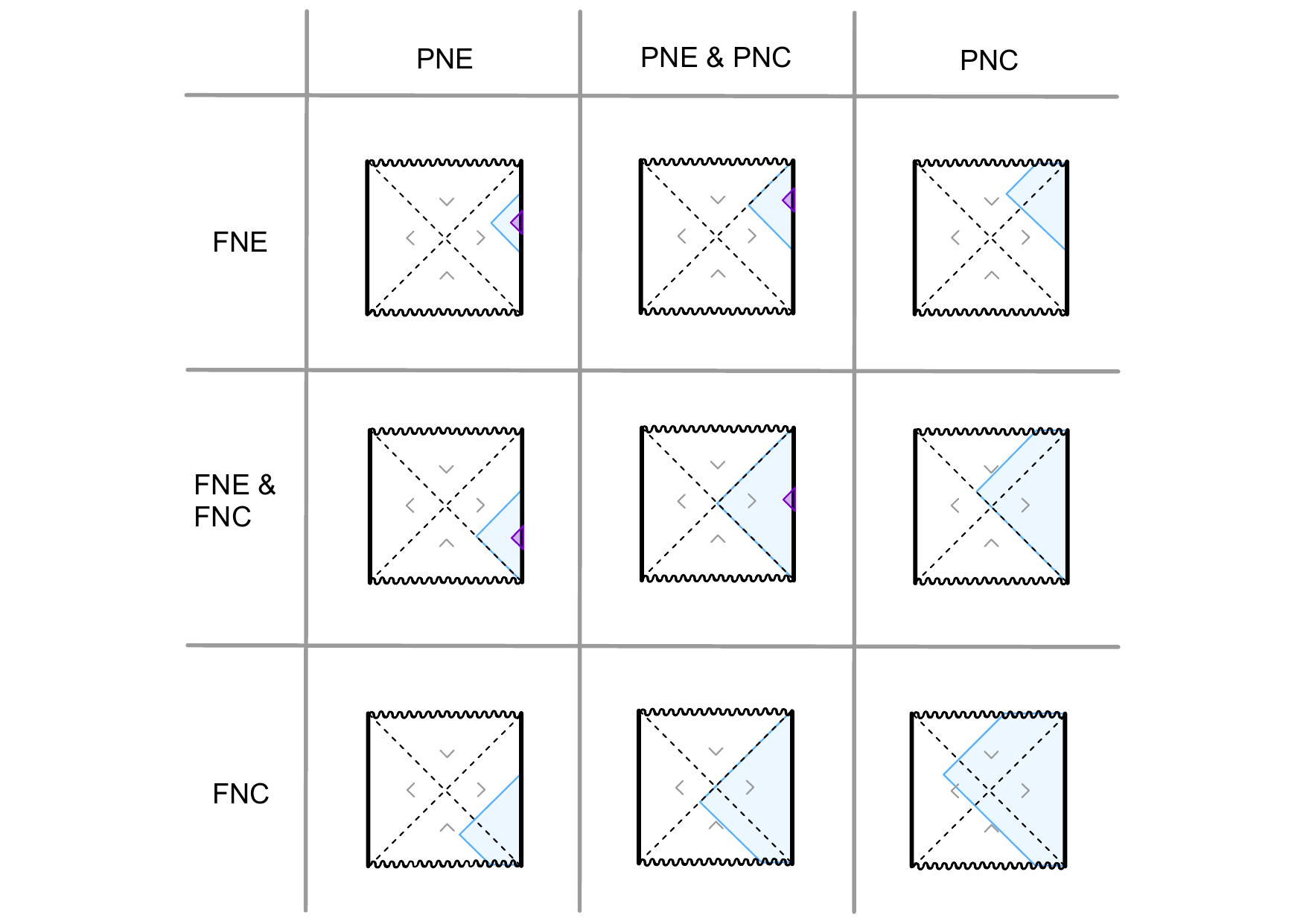}
    \vspace{-1cm}
    \caption{Illustration of the notions of noncontraction, nonexpansion, and accessibility using examples in Schwarzschild-AdS. In the cases where the blue wedge is antinormal (i.e., FNE and PNE), it is accessible from the purple wedge (clockwise from top left: accessible; past-marginally accessible; throat accessible; future-marginally accessible).}
    \label{fig:ads}
\end{figure}

\begin{defn}[Accessibility]
  Given a wedge $a$, the wedge $k\supset a$ is said to be \emph{accessible from $a$} if it satisfies the following conditions:
\begin{enumerate}[I.]
    \item $a\subset f\subset \tilde a '$, where $\tilde a$ is the fundamental complement of $a$ (see Def.~\ref{def:tildea});
    \item $k$ is antinormal at points $p\in \eth f\setminus\eth a$;
    \item $k$ admits a Cauchy slice $\Sigma$ such that $\Sigma\supset \eth a$ and such that for any wedge $h\subsetneq k$ with $a\subset h$, $\eth h\subset \Sigma$, and $\eth h\setminus \eth k$ compact in $M$,\footnote{Some references require a strict inequality~\cite{Bousso:2023sya,Akers:2023fqr}, but our choice will be convenient in proofs. The distinction is not meaningful given the smoothing inherent in the definition of the conditional max entropy $\hmg^\epsilon$.}
    \begin{equation}
        \hmg(k|h)\leq 0~.
    \end{equation}
\end{enumerate}
Operationally, condition III is the statement that the area of any intermediate surface $\eth h$ suffices as an entanglement resource for performing quantum state merging from $k$ to $h$~\cite{Akers:2019wxj}.  
\end{defn} 

\begin{defn}[Max-Hologram]\label{def:emax}
    Given a wedge $a$, its \emph{max-hologram} (or \emph{generalized max-entanglement wedge}), $\emax(a)$, is the wedge union of all wedges that are accessible from $a$~\cite{Bousso:2023sya}.
\end{defn}

\begin{defn}[Lightsheets]
    Let \emph{$\eth a^+$} be the set of points where the wedge $a$ is FNE. A null hypersurface \emph{$L^+(a)$} $\subset H^+(a')$ whose past boundary lies on $\eth a^+$ is called a \emph{future lightsheet} of $a$~\cite{Bousso:1999xy,Bousso:1999cb}. The outward deformation of $a$ along the future lightsheet $L^+(a)$, $a\Cup L^+(a)$, is called a \emph{future lightsheet wedge} of $a$. Past lightsheets and past lightsheet wedges are defined analogously.
\end{defn} 

The following conjecture is a minimal, discrete version~\cite{Shahbazi-Moghaddam:2022hbw,Bousso:2024iry} of the Quantum Focussing Conjecture~\cite{Bousso:2015mna,Bousso:2015wca} (which in turn becomes the focusing theorem of General Relativity in the classical limit):
\begin{conj}[Discrete Max-Focusing]\label{conj:qfc}
    Let $b\subset c$ both be future lightsheet wedges, or both be past lightsheet wedges, of $a$. Then 
    \begin{equation}\label{eq:qfc}
        \hmg(c|b)\leq 0~.
    \end{equation}
An immediate consequence is the ``persistence of nonexpansion'': for any future lightsheet wedge $b$ of $a$
    \begin{equation} \label{eq:persistence}
        \eth b \cap J^+(\eth a^+) \subset \eth b^+~.
    \end{equation}
\end{conj}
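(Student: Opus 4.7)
The plan is to reduce persistence of nonexpansion directly to Discrete Max-Focusing by exhibiting any small outward deformation of $b$ near $p$ as itself a future lightsheet wedge of $a$. Concretely, given $p\in\eth b\cap J^+(\eth a^+)$, I want to show that for a suitable open neighborhood $O\ni p$, every wedge $c\supset b$ with $\eth c\subset \eth b\cup[H^+(b')\cap O]$ satisfies $\hmg(c|b)\leq 0$; this is exactly the definition of $b$ being FNE at $p$, so $p\in\eth b^+$.

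First I would unpack the hypothesis: since $b$ is a future lightsheet wedge of $a$, we have $b=a\Cup L^+(a)$ with $L^+(a)\subset H^+(a')$ and past boundary in $\eth a^+$. The assumption $p\in J^+(\eth a^+)$ provides a point $q\in\eth a^+$ and a causal curve from $q$ to $p$. Since $p\in\eth b\subset \partial b\setminus I(b)$ and $a\subset b$, this curve cannot enter $I(a)\supset I(b)\setminus H^+(b')$, so it must be a null generator of $H^+(a')$ running from $q$ up to $p$; in particular $p\in H^+(a')$ and the generator lies in the closure of $L^+(a)$.

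Second, I would choose $O$ small enough that the local structure of $H^+(a')$ is controlled: every point of $H^+(b')\cap O$ lies on $H^+(a')$ (this uses that $b'\subset a'$ and that the two Cauchy horizons coincide in a neighborhood of the generator from $q$, since nothing from $\partial a'\setminus\partial b'$ can intervene without leaving $O$), and every past-directed null generator of $H^+(a')$ starting in $O$ extends all the way back to $\eth a^+$ without running into caustics before reaching $\eth a^+$. Then for any $c$ as in the definition, set $L^+_c(a)$ to be the union of $L^+(a)$ with the portion of $H^+(a')$ swept out by these generators from the new piece of $\eth c\cap O$ back to $\eth a^+$. By construction $L^+_c(a)\subset H^+(a')$ has past boundary in $\eth a^+$, so it is a future lightsheet of $a$, and $c=a\Cup L^+_c(a)$ is a future lightsheet wedge of $a$ containing $b$.

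Third, with $b\subset c$ both future lightsheet wedges of $a$, Discrete Max-Focusing (Conj.~\ref{conj:qfc}) gives $\hmg(c|b)\leq 0$, completing the argument. The main obstacle is the geometric step in the second paragraph: ensuring that $O$ can be chosen so that the deformed edges $\eth c\cap O$ really do lie on null generators of $H^+(a')$ that trace back uninterrupted to $\eth a^+$. Away from caustics and edge irregularities of $a$ this is a standard consequence of the smooth dependence of null geodesics on initial data, but in the general corner/fold setting one must invoke the qualitative local structure of Cauchy horizons and the hypothesis $p\in J^+(\eth a^+)$ to guarantee that the generator through $p$ is not isolated but embedded in a full open family of generators parametrized by $\eth a^+$.
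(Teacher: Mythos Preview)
The paper does not supply a proof here. The main inequality $\hmg(c|b)\le 0$ is a \emph{conjecture} (Discrete Max-Focusing), and the ``persistence of nonexpansion'' is simply asserted as an immediate consequence, with no argument spelled out.

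Your sketch is precisely the intended derivation of that implication and is correct in outline: given $p\in\eth b\cap J^+(\eth a^+)$, a sufficiently small outward future-null deformation $c\supset b$ near $p$ is again a future lightsheet wedge of $a$ (because near $p$ the horizons $H^+(b')$ and $H^+(a')$ coincide along generators that trace back to $\eth a^+$), and then the conjectured inequality gives $\hmg(c|b)\le 0$, which is exactly the FNE condition at $p$. The causal argument you give---that any causal curve from $q\in\eth a^+$ to $p\in\eth b$ must be a null generator of $H^+(a')$, since $p\notin I^+(a)$ while $I^+(q)\subset I^+(a)$---is the right way to pin down the generator through $p$.

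You are also right to flag the one nontrivial step: in the general non-smooth setting one needs that the generator through $p$ sits inside an open family of generators emanating from $\eth a^+$, so that an entire neighborhood $H^+(b')\cap O$ lies on $H^+(a')$ with past endpoints in $\eth a^+$. The paper does not address this point either; it is implicit in calling the result ``immediate.'' So your write-up is, if anything, more careful than the paper's, and takes the same route.
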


\subsection{Discrete Noncontraction, Marginal Wedges, and Throats}

\begin{defn}[Noncontracting Wedges]
    A wedge $a$ is said to be \emph{future-noncontracting} (FNC) if there exists an open set $O\supset\eth a$ such that no proper past-directed outward null deformation of $a$ with compact support within $O$ is FNE on all new edge points. That is, no wedge $b\supsetneq a$ with $\eth b\subset \eth a \cup [H^-(a')\cap O]$ is FNE on all points in $\eth b\setminus\eth a$. (Heuristically, in sufficiently smooth settings, FNC implies nonnegativity of the outward future quantum expansion of $a$; moreover, positivity of the quantum expansion implies FNC.) \emph{Past-noncontracting} (PNC) is defined analogously.   
\end{defn}

\begin{lem} \label{zlem:intersectFNC}
    The intersection of two FNC (PNC) wedges is FNC (PNC).
\end{lem}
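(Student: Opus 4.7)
The plan is to argue by contrapositive for the FNC case; the PNC case follows by time-reversal symmetry, exchanging $H^+/H^-$ and FNE/PNE throughout. Let $O_a \supset \eth a$ and $O_b \supset \eth b$ be open sets witnessing FNC of $a$ and $b$, respectively. I take $O \equiv O_a \cup O_b$ as the candidate witness of FNC for $a \cap b$; a brief topological argument (using that $a, b$ are wedges and Def.~\ref{zdef:edgehor}, starting from the inclusion $\partial(a\cap b) \subset \partial a \cup \partial b$) shows $\eth(a \cap b) \subset \eth a \cup \eth b$, so $O$ is an open neighborhood of $\eth(a \cap b)$.

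Suppose for contradiction there exists a wedge $c \supsetneq a \cap b$ with $\eth c \subset \eth(a \cap b) \cup [H^-((a \cap b)') \cap O]$ that is FNE at every point of $\eth c \setminus \eth(a \cap b)$. Form the wedge unions $c_a \equiv c \Cup a$ and $c_b \equiv c \Cup b$ (Def.~\ref{zdef:wedgeunion}). Picking any $p \in c \setminus (a \cap b)$: either $p \notin a$, giving $p \in c_a \setminus a$, or $p \notin b$, giving $p \in c_b \setminus b$; so at least one of $c_a \supsetneq a$ or $c_b \supsetneq b$ holds. Without loss of generality, assume $c_a \supsetneq a$. The goal is then to show that $c_a$ is a past-directed outward null deformation of $a$ with $\eth c_a \setminus \eth a$ compactly supported in $O_a$ and FNE at every new edge point, which contradicts FNC of $a$.

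The key local step is that at each new edge point $q \in \eth c_a \setminus \eth a$, the wedges $c$ and $c_a$ coincide in some open neighborhood of $q$. Heuristically, $q$ sits in the past of $\eth a$ along a past-outward null generator, so $q \notin a$ and $a$ lies on the opposite side of $\eth a$ in a neighborhood of $q$; consequently the double-complement operation in $c \Cup a = (c' \cap a')'$ contributes no points near $q$ from the $a$ side. Given this local identity, the FNE hypothesis on $c$ at $q$ transfers directly to $c_a$, and the null-deformation structure and compact support in $O_a$ of $c_a$ relative to $a$ are inherited from those of $c$ relative to $a \cap b$, since the past null generators of $\eth a$ relevant to $c_a$ are a subset of those of $\eth(a \cap b)$ relevant to $c$.

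The main obstacle is rigorously controlling the wedge-union construction near ``corner'' points where $\eth a$ meets $\eth b$ along $\eth(a \cap b)$ and the two families of past-outward null generators interact nontrivially. The cleanest fix is to shrink $O$ within $O_a \cup O_b$ so that the deformation supports near $\eth a$ and near $\eth b$ are locally disentangled --- permitted because FNC only requires the \emph{existence} of some witnessing open set --- together with a check that new edge points of $c$ arising from deforming the $\eth b$-portion of $\eth(a \cap b)$ lie inside $a$ for sufficiently small deformations, so they are absorbed into the interior of $c_a$ and do not contribute to $\eth c_a \setminus \eth a$.
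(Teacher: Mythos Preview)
Your overall strategy—take a past-null deformation violating FNC of $a\cap b$, form its wedge union with $a$ (or $b$), and derive a contradiction with FNC of $a$—is exactly the paper's. The genuine gap is in the step where you transfer FNE from the deformation $c$ to $c_a=c\Cup a$ at the new edge points. You argue that $c$ and $c_a$ ``coincide in some open neighborhood'' of each such point $q$, so FNE ``transfers directly.'' But FNE at $q$ is the statement that $\hmg(b|c)\leq 0$ for all small future-null deformations $b\supset c$ supported near $q$, and $\hmg$ depends on the \emph{global} wedges $b$ and $c$, not merely on their edges near $q$. Even granting local coincidence of the edges (itself not assured: the $\Cup\, a$ operation can slide portions of $\eth c$ forward along $H^+$), replacing the base wedge $c$ by the strictly larger $c_a$ changes both arguments of the conditional entropy. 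The paper handles precisely this step by invoking Discrete Max-Focusing (Conj.~\ref{conj:qfc}) to control the null sliding of the edge under $\Cup\, a$, together with Discrete Subadditivity (Thm.~\ref{zconj:SSA}), which is the tool that upgrades $\hmg(\,\cdot\,|c)\leq 0$ to $\hmg(\,\cdot\Cup a\,|\,c\Cup a)\leq 0$. These two ingredients are what is missing; a local-coincidence claim cannot substitute for them.

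A secondary issue is your quantifier handling. You fix witnesses $O_a,O_b$ for $a,b$ and attempt to make $O_a\cup O_b$ (or a shrunken version) witness FNC of $a\cap b$, then struggle to force $\eth c_a\setminus\eth a$ into $O_a$, especially near corners. The paper inverts this: ``not FNC of $a\cap b$'' means a bad deformation exists in \emph{every} open $O_c\supset\eth(a\cap b)$, so to contradict FNC of $a$ at an arbitrary $O_a$ one simply chooses $O_c\subset a\Cup O_a$ and the induced $f=d\Cup a$ automatically has its new edge inside $O_a$. This eliminates the corner-shrinking maneuvers you flag as the main obstacle.
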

\begin{proof}
    Let $a$ and $b$ be FNC wedges, and suppose for contradiction that $c\equiv a\cap b$ is not FNC. Then for any open set $O_c\supset\eth c$, there exists a past deformation $d\supsetneq c$, with $\eth d \subset\eth c\cup [H^-(c')\cap O_c]$, such that $d$ is everywhere FNE on $\eth d\setminus\eth c$. 
    
    Let $f=d\Cup a$. We may assume that $f\neq a$; otherwise exchange the names of $a$ and $b$. Since $d$ is everywhere FNE on $\eth d\setminus\eth c$, Discrete Max-Focusing and Discrete Subadditivity imply that $f$ is everywhere FNE on $\eth f\setminus\eth a$. 
    Moreover, we can construct such a deformation in any open neighborhood $O_a$ of $a$ by choosing $O_c\subset a\Cup O_a$. This shows that $a$ is not FNC, in conflict with the assumption of the Lemma. Hence $c$ must be FNC.

    The time-reversed argument shows that $a\cap b$ is PNC for PNC wedges $a$ and $b$.
\end{proof}


\begin{defn}[Marginal accessibility from $a$]
Let $a\subset k$ be wedges. $k$ is called \emph{future-marginally accessible} from $a$ if $k$ is accessible from $a$ and $k$ is FNC. \emph{Past-marginally accessible} is defined analogously. The wedge $k$ is called a \emph{throat accessible from $a$} if $k$ is future- and past-marginally accessible from $a$.     
\end{defn}

\section{Zigzag and the Simple Wedge}
\label{full}

\subsection{Simple Wedge in General Spacetimes}
\label{sec:general}

\begin{defn}[Zig and zag]\label{def:zig}
Given a wedge $a$, let $Q(a)$ be the set of future lightsheet wedges $l^+(a)$ that satisfy 
\begin{enumerate}[A.]
    \item $l^+(a)$ is PNE---and hence antinormal, by Eq.~\eqref{eq:persistence}---on $\eth l^+(a)\setminus \eth a$;
    \item $l^+(a)$ contains no PNC future lightsheet wedge of $a$ as a proper subset;
    \item $l^+(a)\subset\tilde a'$\footnote{Property C ``almost'' follows from the rest of the definition, in the sense that it could be eliminated with a weak ``generic'' assumption analogous to an assumption that prevents null congruences from having vanishing expansion over finite affine length in classical General Relativity.}.
\end{enumerate}
We define the \emph{zig} of $a$, $z^+(a)$ as their wedge union:
\begin{equation}
    z^+(a) = \Cup_{l^+(a)\in Q(a)} ~l^+(a)~.
\end{equation}
The \emph{zag} $z^-(a)$ is similarly defined in terms of past lightsheet wedges of $a$ that contain no FNC past lightsheet wedge of $a$ as a proper subwedge.
\end{defn}

Next, we establish two important properties of the zig.

\begin{lem}
    $z^+(a)\in Q(a)$.
\end{lem}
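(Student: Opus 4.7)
The plan is to verify that $z^+(a) \equiv \Cup_{l \in Q(a)} l$ satisfies each of Definition~\ref{def:zig}'s properties A, B, C, together with being itself a future lightsheet wedge of $a$. I would begin with the structural claim. Every element $l \in Q(a)$ has the form $a \Cup L_l^+$ for some future lightsheet $L_l^+$ whose null generators emanate from $\eth a^+$, and the wedge union amounts to extending each generator through $p \in \eth a^+$ to the supremum of the extensions attained by the various $L_l^+$. Property B on each $l$ prevents the generators from being extended past the first PNC obstruction, so the supremum defines a legitimate lightsheet $L^+$ with $z^+(a) = a \Cup L^+$. Property C is then immediate, because $\tilde a'$ is a wedge containing each $l \in Q(a)$, so it contains the wedge union as well.

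For property A I would fix $p \in \eth z^+(a) \setminus \eth a$ and approximate $p$ by endpoints of generators of a cofinal family $l_n \in Q(a)$, each of which is PNE at the corresponding endpoint by its own property A. Passing this PNE property to the limit at $p$ uses Discrete Max-Focusing (Conjecture~\ref{conj:qfc}) to compare a past-outward deformation $b$ of $z^+(a)$ near $p$ with the analogous deformation of a sufficiently close $l_n$, together with Discrete Subadditivity (Theorem~\ref{zconj:SSA}) and the chain rule (Theorem~\ref{zthm:chainrule}) to conclude $\hmg(b \mid z^+(a)) \leq 0$ for any admissible $b$.

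The main obstacle is property B. Suppose for contradiction that $m \subsetneq z^+(a)$ is a PNC future lightsheet wedge of $a$. Since $m$ is a proper sub-wedge of the union, some $l \in Q(a)$ satisfies $l \not\subset m$, and hence $l \Cup m \supsetneq m$. I would then like to argue that a suitable refinement of $l \Cup m$ still lies in $Q(a)$ and strictly contains $m$, contradicting property B of that refinement. The subtlety is that $l \Cup m$ may fail property A on generators where $m$ extends further than $l$, since $m$ is only PNC and not a priori PNE. I would address this by covering each such generator by another $l' \in Q(a)$, which is possible because $m \subset z^+(a) = \Cup_{l'} l'$, and piecing the resulting lightsheet wedges together using Discrete Max-Focusing and Discrete Subadditivity, together with a wedge-union analog of Lemma~\ref{zlem:intersectFNC} that a wedge union of PNC lightsheet wedges is PNC (proved by restricting a hypothetical past deformation to one side of the union and contradicting PNC of that side). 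This produces an element of $Q(a)$ strictly containing $m$, violating its property B and closing the contradiction.
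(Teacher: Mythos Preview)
Your plan mirrors the paper's: verify that $z^+(a)$ is a future lightsheet wedge and check A, B, C. For A, the paper simply cites Corollary~48 of Ref.~\cite{Bousso:2024iry} (wedge unions inherit antinormality on new edge points); your limiting argument via Discrete Max-Focusing and subadditivity is essentially a re-derivation of that result. Property C is handled identically.

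The gap is in your treatment of property B. Both you and the paper pick $\hat l\in Q(a)$ not contained in the hypothetical PNC subwedge. The paper forms $v=w\Cup\hat l$, asserts $v\in Q(a)$, and notes that $w\subsetneq v$ then violates property B of $v$. You rightly worry that property A of such a union is not immediate (since $m$ is only PNC, not PNE), and you try to remedy this by replacing $m$ with a wedge union of further $l'\in Q(a)$ that cover it. But your conclusion---``this produces an element of $Q(a)$ strictly containing $m$, violating its property B''---is circular. Membership in $Q(a)$ \emph{requires} property B; any future lightsheet wedge that properly contains the PNC wedge $m$ fails B by definition and hence is \emph{not} in $Q(a)$. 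Exhibiting such a wedge (even one satisfying A and C) yields no contradiction: it simply lies outside $Q(a)$. You never establish property B for your union independently of the very failure you then invoke, so the reductio does not close. (The side remark about a ``wedge-union analog of Lemma~\ref{zlem:intersectFNC}'' is also off-target: the $l,l'\in Q(a)$ you are piecing together are PNE on their new edges, not PNC, so a PNC-preservation lemma plays no role.) The paper's route avoids this trap by locating the contradiction against an object it has already declared to be in $Q(a)$.
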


\begin{proof}  
Property A: By Corollary 48 of Ref.~\cite{Bousso:2024iry}, $z^+(a)$, too, is antinormal on $\eth z^+(a)\setminus \eth a$.

Property B: Suppose $z^+(a)$ contained a proper PNC subwedge $w$ that was a future lightsheet wedge of $a$. Therefore $Q(a)$ contains an element $\hat l(a)$ that is not contained in $w$ (or else $w$ would not be a proper subwedge of $z^+(a)$. The union of two lightsheet wedges is obviously itself one: $v\equiv w\Cup \hat l(a)\in Q(a)$; yet $v$ contains the PNC wedge $w$ as a proper subwedge. This contradicts the definition of $Q(a)$.

Property C holds trivially.
\end{proof}

\begin{lem}\label{lem:pnc}
$z^+(a)$ is PNC (in the spacetime $\tilde a'$).
\end{lem}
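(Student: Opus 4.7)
The plan is to argue by contradiction, in a manner that parallels the proof of property B in the previous lemma. Suppose $z^+(a)$ is not PNC in $\tilde a'$. By the definition of PNC, for every open neighborhood $O\supset\eth z^+(a)$ there exists a proper future-directed outward null deformation $b\supsetneq z^+(a)$, with $\eth b\subset\eth z^+(a)\cup[H^+(z^+(a)')\cap O]$, such that $b$ is PNE at every point of $\eth b\setminus\eth z^+(a)$. The strategy is to show $b\in Q(a)$: since $z^+(a)=\Cup_{l\in Q(a)} l$, this forces $b\subset z^+(a)$, contradicting $b\supsetneq z^+(a)$.

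First I would argue that $b$ is a future lightsheet wedge of $a$ and verify properties A and C of Def.~\ref{def:zig}. By the previous lemma, $z^+(a)\in Q(a)$, so $z^+(a)$ is antinormal---and in particular FNE---on $\eth z^+(a)\setminus\eth a$. The null generators orthogonal to $\eth a$ that sweep out the lightsheet of $z^+(a)$ can therefore be continued into the deformation region along $H^+(z^+(a)')$, and by persistence of nonexpansion, Eq.~\eqref{eq:persistence}, they remain FNE; so $b$ itself is a future lightsheet wedge of $a$. For property A, split $\eth b\setminus\eth a$ into the new portion $\eth b\setminus\eth z^+(a)$, which is PNE by construction of the deformation, and the unchanged portion $\eth b\cap\eth z^+(a)\setminus\eth a$, which is PNE by the antinormality of $z^+(a)$. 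Property C is arranged by shrinking $O$, using that $z^+(a)\subset\tilde a'$ is open.

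The main obstacle is property B: $b$ must contain no PNC future lightsheet wedge of $a$ as a proper subwedge. I would argue by cases. Suppose some such $m\subsetneq b$ existed. If $m\subset z^+(a)$, then either $m=z^+(a)$---in which case $z^+(a)$ is PNC, contradicting the standing assumption---or $m\subsetneq z^+(a)$, which directly contradicts property B of $z^+(a)\in Q(a)$ from the previous lemma. The delicate subcase is $m\not\subset z^+(a)$, where $m$ reaches into the thin deformation shell $b\setminus z^+(a)$. Here I would adapt the wedge-union argument used in the previous lemma's B proof: the union $m\Cup z^+(a)$ is a future lightsheet wedge of $a$ strictly containing $z^+(a)$, and the PNC character of $m$, combined with Discrete Max-Focusing and Discrete Subadditivity, should upgrade to produce a PNC proper sublightsheet of $z^+(a)$, again contradicting property B of $z^+(a)$. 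With property B secured for $b$, we conclude $b\in Q(a)$, yielding the desired contradiction.
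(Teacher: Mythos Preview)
Your core strategy---assume $z^+(a)$ is not PNC, extract a future outward null deformation $b\supsetneq z^+(a)$ that is PNE on its new edge, and argue this enlarges the defining wedge union---is exactly the paper's argument. The paper's proof is a single sentence: ``If $z^+(a)$ were not PNC, then there would exist a future outward deformation of $z^+(a)$ that is PNE and which thus would enlarge the wedge union.'' It does not pause to check that $b$ satisfies properties A, B, C of Def.~\ref{def:zig}; at the stated level of rigor this is taken as evident.

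Where your proposal overreaches and then stumbles is the verification of property~B. Your easy cases ($m=z^+(a)$ or $m\subsetneq z^+(a)$) are fine. In the ``delicate subcase'' $m\not\subset z^+(a)$ you assert that the PNC character of $m$, via Discrete Max-Focusing and Discrete Subadditivity, ``should upgrade to produce a PNC proper sublightsheet of $z^+(a)$,'' but you do not say how, and the obvious candidate $m\cap z^+(a)$ is not covered by Lemma~\ref{zlem:intersectFNC} (that lemma needs \emph{both} factors PNC, and you are assuming $z^+(a)$ is not). So this subcase is a genuine gap in your write-up. The paper simply does not engage with this issue; if you want to keep your more careful style, you would need a separate argument here rather than an appeal to the intersection lemma.
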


\begin{proof}
    If $z^+(a)$ were not PNC, then there would exist a future outward deformation of $z^+(a)$ that is PNE and which thus would enlarge the wedge union.
\end{proof}

The zag obeys the time-reversed properties; thus the zag is FNC. Clearly $z^+(z^+((a))=z^+(a)$. But when zigs are alternated with zags, the wedge can keep growing. 

\begin{defn}\label{def:zigzag}
We define the \emph{$n$-th zigzag} of $a$ inductively, by setting $z_0=a$ and 
\begin{equation}
    z_n(a) \equiv z^+(z_{n-1}(a))~~(n~\text{odd})~~;~~~
    z_n(a) \equiv z^-(z_{n-1}(a))~~(n~\text{even})~.
\end{equation}
The zigzag construction also defines a preferred, piecewise null Cauchy slice for $z_n(a)\cap a'$, defined inductively by setting $Z_0=\varnothing$ and
\begin{equation}
    Z_n(a)= H[(z_n(a)]\setminus I[z_{n-1}(a)]~.
\end{equation}
See Fig.~\ref{fig:adszigzag} for an illustration.
\end{defn}

\begin{thm}[Accessibility of the zigzag] 
\label{zthm:accessible}
   For all $n$, the $n$-th zigzag of $a$, $z_n(a)$, is accessible from $a$ via the Cauchy slice $Z_n(a)$.\footnote{In the classical limit, the stronger statement of monotonicity holds: $\A(h)- \A(g)\leq 0$ for any two intermediate wedges $z\supset h\supset g \supset a$ with edges on $Z_n(a)$. However, this stronger statement does not hold at any quantum level. The fact that monotonicity already fails at the level of $\S$ is a premonition of the role of max and min entropies in a structure that ignores them.}
\end{thm}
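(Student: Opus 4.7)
The plan is to proceed by induction on $n$. The base case $n=0$ is immediate: $z_0(a)=a$ and $Z_0(a)=\varnothing$, so the only admissible intermediate wedge in condition III is $h=a$ itself, giving $\hmg(a|a)=0$; conditions I and II reduce to statements about $a$ that are vacuous.

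For the inductive step, assume $z_{n-1}(a)$ is accessible from $a$ via $Z_{n-1}(a)$. Without loss of generality take $n$ odd, so that $z_n=z^+(z_{n-1})$ is a zig; the zag case follows by time reversal. Condition I, $a\subset z_n\subset \tilde a'$, is handled by combining property C of Def.~\ref{def:zig} with the inductive hypothesis, together with a short auxiliary argument showing that lightsheet wedges of a wedge in $\tilde a'$ stay in $\tilde a'$. Condition II follows by splitting $\eth z_n\setminus\eth a$ into two classes: points $p$ that also lie on $\eth z_{n-1}$, where the zig did not extend and $\eth z_n$ coincides with $\eth z_{n-1}$ in a neighborhood of $p$, so antinormality is inherited from the inductive hypothesis; and new points on the lightsheet terminating the zig, where antinormality is guaranteed by property A of Def.~\ref{def:zig} together with the persistence of nonexpansion, Eq.~\eqref{eq:persistence}.

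Condition III is the heart of the argument. Given any wedge $h$ with $a\subset h\subsetneq z_n$ and $\eth h\subset Z_n(a)=Z_{n-1}(a)\cup Z_n$, introduce the auxiliary wedge
\[
    k\equiv h\Cup z_{n-1},
\]
so that $h\subset k\subset z_n$, with $k$ itself a future lightsheet wedge of $z_{n-1}$ whose edge lies on (the portion of) $Z_n$ (enclosed by $\eth h\cap Z_n$). By the chain rule (Theorem~\ref{zthm:chainrule}),
\[
    \hmg(z_n\mid h)\leq \hmg(z_n\mid k)+\hmg(k\mid h).
\]
The first term is nonpositive by Discrete Max-Focusing (Conjecture~\ref{conj:qfc}) applied to the nested future lightsheet wedges $k\subset z_n$ of $z_{n-1}$. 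For the second term, if $h\supset z_{n-1}$ then $k=h$ and the term vanishes trivially; otherwise I apply Discrete Subadditivity (Theorem~\ref{zconj:SSA}) with roles $(a,b,c)\mapsto(k,z_{n-1},h)$. Its hypotheses $k\supseteq z_{n-1}$, $k\supseteq h$, and $k\cap z_{n-1}'\subseteq h$ are manifest from the wedge identity $k'=h'\cap z_{n-1}'$, and the required input $\hmg(z_{n-1}\mid h\cap z_{n-1})\leq 0$ is supplied by the inductive hypothesis applied to the intermediate wedge $h\cap z_{n-1}\subsetneq z_{n-1}$, whose edge lies on $Z_{n-1}(a)$. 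Combining the bounds yields $\hmg(z_n\mid h)\leq 0$, completing the induction.

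The main obstacle I anticipate lies not in the logical structure but in the wedge-theoretic bookkeeping near the corner where the newly added lightsheet $Z_n$ meets the previous Cauchy slice $Z_{n-1}(a)$ at $\eth z_{n-1}$: one must verify that $h\cap z_{n-1}$ is an admissible intermediate wedge for invoking the inductive hypothesis (edge on $Z_{n-1}(a)$, correct compactness of $\eth(h\cap z_{n-1})\setminus\eth z_{n-1}$), that the set-theoretic inclusion $k\cap z_{n-1}'\subseteq h$ is not spoiled by edge points lying in neither $h$ nor $h'$, and that the lightsheet construction in the zig preserves containment in $\tilde a'$ as needed for Condition I. These are technical issues to be dispatched with the apparatus assembled in Section~\ref{definitions}, rather than conceptual obstacles.
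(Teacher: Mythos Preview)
Your proposal is correct and takes essentially the same approach as the paper: the chain rule splits $\hmg(z_n\mid h)$ at $k=h\Cup z_{n-1}$, Discrete Max-Focusing handles the outer piece, and Discrete Subadditivity reduces the inner piece to the statement one level down. The only difference is packaging---you organize this as an induction on $n$ (invoking the hypothesis at $h\cap z_{n-1}$), whereas the paper unrolls the recursion into an explicit telescoping chain $z_n\to h\Cup z_{n-1}\to h\Cup z_{n-2}\to\cdots$ and sums; the technical checks you flag at the end (edge of $h\cap z_{n-1}$ on $Z_{n-1}$, the inclusion $k\cap z_{n-1}'\subset h$) are exactly the ones implicitly used in the paper's unrolled version.
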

\begin{proof}
    We must show that $z_n(a)$ satisfies properties I-III of an accessible wedge.

    \textit{Property I} $a\subset z_n(a)\subset\tilde a'$ by construction. 
    
    \textit{Property II:} $z_n(a)$ is antinormal at points $p\in\eth z_n(a)\setminus\eth a$ since it is the union of sets with the same property (Corollary 48 in Ref.~\cite{Bousso:2024iry}).

    \textit{Property III:} The Cauchy slice $Z_n(a)$ satisfies $Z_n(a)\supset\eth a$. We must show that
    \begin{equation}
        \hmg[z_n(a)|h] \leq 0
    \end{equation}
    for any wedge $h\neq z_n(a)$ such that $h\supset a$, $\eth h\subset Z_n(a)$, and $\eth h\setminus\eth z(a)$ is compact in $M$. Since only one input wedge $a$ is involved, we suppress the arguments ``$(a)$'' in the derivation below:
    \begin{align}
        &\hmg[z_{n}|h\Cup z_{n-1}] \leq 0 \\
        &\hmg[h\Cup z_{n-1}|h\Cup z_{n-2}] \leq 0 \\
        &\qquad \vdots \nonumber\\
        &\hmg[h\Cup z_{1}|a] \leq0 \,.
    \end{align}
    To obtain the first inequality, note that $h\Cup z_{n-1}$ is a null deformation of the antinormal wedge $z_{n-1}$ and thus nonexpanding towards the edge of $z_n$, by the persistence of nonexpansion. The first inequality then follows from Discrete Max-Focusing (Conj.~\ref{conj:qfc}). 
    
    A completely analogous argument yields $\hmg[z_{n-1}|(h\cap z_{n-1})\Cup z_{n-2}]\leq 0$. By taking the union of both sets with $h$ and appealing to Discrete Subadditivity (Conj.~\ref{zconj:SSA}) we obtain the second inequality. This process continues until we reach $a$.
    
    After adding all inequalities and applying the chain rule (Theorem~\ref{zthm:chainrule}), we obtain
    \begin{equation}
        \hmg[z_n|h] \leq 0~.
    \end{equation}
\end{proof}
\begin{cor}
    For all $n$, $z_n(a) \subset \emax(a)$ by Def.~\ref{def:emax}.
\end{cor}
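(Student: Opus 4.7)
The plan is to derive the inclusion directly from the preceding Theorem~\ref{zthm:accessible} together with Definition~\ref{def:emax}, since essentially no further work is required beyond observing that wedge unions contain their constituents.

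First, I would invoke Theorem~\ref{zthm:accessible} to conclude that $z_n(a)$ is accessible from $a$, with the Cauchy slice $Z_n(a)$ witnessing condition III. Next, I would recall Definition~\ref{def:emax}: the max-hologram $\emax(a)$ is the wedge union $\Cup_k k$ taken over all wedges $k$ that are accessible from $a$. Since $z_n(a)$ is one such wedge, it appears in the collection being unioned.

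The only point that requires even a brief comment is that an arbitrary wedge in the collection is contained in the wedge union. This follows from the definition $a\Cup b\equiv (a'\cap b')'$ (Def.~\ref{zdef:wedgeunion}) and the fact that $s\subset t$ implies $t'\subset s'$: one has $a'\cap b'\subset a'$, so taking spacelike complements yields $a=a''\subset(a'\cap b')'=a\Cup b$. The same reasoning applied to an arbitrary indexed wedge union shows $z_n(a)\subset \Cup_k k=\emax(a)$.

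No real obstacle is anticipated here; the corollary is a direct bookkeeping consequence of Theorem~\ref{zthm:accessible} and Definition~\ref{def:emax}. The substantive content has already been established in the accessibility theorem, and this corollary simply records the promised containment inside the max-hologram, which will be used later to compare $z(a)$ with the generalized entanglement wedge.
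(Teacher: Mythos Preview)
Your proposal is correct and matches the paper's approach exactly: the corollary is stated with no separate proof beyond the phrase ``by Def.~\ref{def:emax},'' since accessibility of $z_n(a)$ from Theorem~\ref{zthm:accessible} immediately places it in the wedge union defining $\emax(a)$. Your brief verification that each constituent wedge is contained in a wedge union is a harmless elaboration of a step the paper leaves implicit.
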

\begin{cor}
    For all odd (even) $n$, $z_n(a)$ is past (future) marginally accessible from $a$, by Lemma~\ref{lem:pnc}.
\end{cor}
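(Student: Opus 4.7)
The plan is to assemble two already-established facts: accessibility of $z_n(a)$ from $a$, proven in Theorem~\ref{zthm:accessible}, and noncontraction of the last zig or zag step, supplied by Lemma~\ref{lem:pnc} and its time reverse. Since past-marginal accessibility from $a$ is by definition accessibility from $a$ together with PNC, and future-marginal accessibility is accessibility together with FNC, no further argument is needed beyond matching the ingredients to the case at hand.

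For odd $n$, the recursion in Definition~\ref{def:zigzag} reads $z_n(a) = z^+(z_{n-1}(a))$, so $z_n(a)$ is the zig of its predecessor. Applying Lemma~\ref{lem:pnc} with input wedge $z_{n-1}(a)$ immediately yields that $z_n(a)$ is PNC, while Theorem~\ref{zthm:accessible} supplies accessibility from $a$ via the Cauchy slice $Z_n(a)$. Conjoining the two is exactly the definition of past-marginal accessibility of $z_n(a)$ from $a$.

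For even $n$, $z_n(a) = z^-(z_{n-1}(a))$ is a zag, and the time-reversed analog of Lemma~\ref{lem:pnc}---flagged explicitly in the sentence immediately following that lemma---states that zags are FNC. Pairing this with accessibility from $a$ gives future-marginal accessibility. The only mildly delicate point worth a remark is that Lemma~\ref{lem:pnc} phrases PNC within the spacetime $\widetilde{z_{n-1}(a)}'$ rather than within $M$; since $\tilde a' \subset \widetilde{z_{n-1}(a)}'$, PNC with respect to the larger ambient constrains more deformations and hence implies the PNC condition used in the definition of marginal accessibility. I do not foresee any substantive obstacle: the work was already done in proving Theorem~\ref{zthm:accessible} and Lemma~\ref{lem:pnc}, and this corollary merely packages their conjunction.
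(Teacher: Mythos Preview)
Your proposal is correct and matches the paper's approach exactly: the corollary is stated without proof in the paper, but its placement (as a corollary of Theorem~\ref{zthm:accessible}) and its explicit pointer to Lemma~\ref{lem:pnc} make clear that the intended argument is precisely the conjunction you spell out. Your additional remark about the ambient spacetime $\widetilde{z_{n-1}(a)}'$ versus $\tilde a'$ is a careful observation that the paper leaves implicit.
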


\begin{defn}[Simple wedge] \label{zdef:simple}
    The \emph{simple wedge}, $\z(a)$, is defined as an infinite zigzag:
    \begin{equation}
        z(a) \equiv \lim_{n\to\infty} z_n(a)~.
    \end{equation}
    It is equipped with a preferred, piecewise null Cauchy slice
    \begin{equation}
        Z(a)\equiv \lim_{n\to\infty} Z_n(a)~.
    \end{equation}
\end{defn}
Note that $z(a)$ shares all properties of the finite-$n$ zigzags. It is accessible from $a$; and it is both PNC and FNC: otherwise, it could be enlarged by either another zig or another zag. We summarize these properties in a third corollary of Theorem~\ref{zthm:accessible}:
\begin{cor}\label{cor:zprop}
The simple wedge satisfies the following properties:
\begin{itemize}
    \item $z(a)$ is a throat accessible from $a$ via the Cauchy slice $Z(a)$;
    \item in particular, $z(a)$ is antinormal on $\eth z(a)\setminus\eth a$, and $z(a)\subset \emax(a)$.
\end{itemize}
\end{cor}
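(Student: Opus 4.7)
The plan is to inherit each claim of the corollary from the properties of the finite zigzags $z_n(a)$ established in Theorem~\ref{zthm:accessible}, and then to rule out further enlargement of the limit by another zig or zag.

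First, I would verify that $z(a)$ is accessible from $a$ via $Z(a)$. Property~I ($a\subset z(a)\subset\tilde{a}'$) is preserved under the monotone limit of the nested $z_n(a)$. Property~II (antinormality on $\eth z(a)\setminus\eth a$) follows because each $z_n(a)$ is antinormal on the relevant portion of its edge and wedge unions of antinormal wedges are antinormal (Corollary~48 of Ref.~\cite{Bousso:2024iry}). Property~III is the nontrivial part: given an intermediate wedge $h$ with $a\subset h\subsetneq z(a)$, $\eth h\subset Z(a)$, and $\eth h\setminus\eth z(a)$ compact in $M$, compactness guarantees that $\eth h\subset Z_n(a)$ for all $n$ sufficiently large. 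I would then apply Theorem~\ref{zthm:accessible} to obtain $\hmg[z_n(a)|h]\leq 0$. Iterating Discrete Max-Focusing on the successive zig/zag steps gives $\hmg[z_m(a)|z_n(a)]\leq 0$ for every $m>n$; passing $m\to\infty$ and invoking the chain rule (Theorem~\ref{zthm:chainrule}) yields $\hmg[z(a)|h]\leq \hmg[z(a)|z_n(a)]+\hmg[z_n(a)|h]\leq 0$, as required.

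Second, to show that $z(a)$ is a throat, I would argue by contradiction in each direction. If $z(a)$ were not PNC, then by the definition of PNC there would exist a past-directed outward null deformation of $z(a)$ whose new edge points are everywhere FNE. Such a deformation yields a future lightsheet wedge of $z(a)$ satisfying properties A, B, and C of Def.~\ref{def:zig}, hence an element of $Q(z(a))$ strictly larger than $z(a)$. But this would mean that one additional zig properly enlarges $z(a)$, contradicting $z(a)=\lim_{n\to\infty} z_n(a)$. A time-reversed argument using zags shows that $z(a)$ is FNC. Together with accessibility, this establishes the first bullet. The second bullet is then essentially a recap: antinormality on $\eth z(a)\setminus\eth a$ was already established as Property~II, and $z(a)\subset\emax(a)$ follows immediately from accessibility via Def.~\ref{def:emax}, since $\emax(a)$ is the wedge union of all wedges accessible from $a$.

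The main obstacle I anticipate lies in the limiting step for Property~III: one must confirm that ``$\eth h\setminus\eth z(a)$ compact'' really forces $\eth h\subset Z_n(a)$ at some finite $n$, and that the resulting chain of inequalities converges as claimed. A related subtlety arises in the throat argument if a hypothetical enlarging lightsheet escapes to infinity or accumulates near $\partial\tilde a'$; here the ``generic'' clause mentioned in the footnote to property~C of Def.~\ref{def:zig} may need to be invoked to guarantee that any admissible enlarging deformation would in fact have been generated at some finite stage of the zigzag construction.
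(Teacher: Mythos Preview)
Your overall strategy coincides with the paper's: it simply remarks that $z(a)$ inherits accessibility from the finite zigzags and is both PNC and FNC because ``otherwise, it could be enlarged by either another zig or another zag.'' Your write-up is considerably more detailed than that one-line argument, and the extra care you take with the limit in Property~III (compactness forcing $\eth h\subset Z_n(a)$ at some finite stage, then chaining inequalities) goes beyond what the paper actually spells out; the subtleties you flag in your last paragraph are real and are left implicit in the paper as well.

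There is, however, a directional slip in your throat argument. You write that ``if $z(a)$ were not PNC, then \ldots\ there would exist a \emph{past}-directed outward null deformation of $z(a)$ whose new edge points are everywhere \emph{FNE}.'' That is the negation of FNC, not of PNC. By the time-reverse of the FNC definition, \emph{not} PNC means that for every open neighborhood of $\eth z(a)$ there exists a \emph{future}-directed outward null deformation whose new edge points are everywhere \emph{PNE}; compare the proof of Lemma~\ref{lem:pnc}. With this correction your logic lines up: such a future deformation furnishes (after restriction, to secure property~B) an element of $Q(z(a))$ strictly larger than $z(a)$, so one more zig would enlarge $z(a)$, contradicting the limit. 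The time-reversed argument then gives FNC via an extra zag. Once the directions are fixed, your proof is correct and follows the paper's route.
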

Moreover, $z(a)$ shares a defining characteristic of the simple (or outermost) wedge in AdS/CFT:
\begin{thm}\label{zthm:zigoutermost}
    $z(a)$ is contained in any other throat accessible from $a$. 
\end{thm}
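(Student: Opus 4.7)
The plan is to prove by induction on $n$ that $z_n(a)\subset k$ for every $n\geq 0$, and then take the limit $n\to\infty$ in Definition~\ref{zdef:simple}. The base case $z_0(a)=a\subset k$ is immediate from property I of accessibility. For the inductive step, assume $z_{n-1}(a)\subset k$; I will do the zig case (WLOG $n$ odd, so $z_n(a)=z^+(z_{n-1}(a))$), since the zag case is completely parallel under time reversal. Since subwedges of $k$ are preserved under wedge union (if $l_\alpha\subset k$ for each $\alpha$, then $l_\alpha'\supset k'$, so $\bigcap_\alpha l_\alpha'\supset k'$, and taking a prime yields $\Cup_\alpha l_\alpha\subset k$), it suffices to show $l^+\subset k$ for every $l^+\in Q(z_{n-1}(a))$.

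Next I would argue by contradiction. Suppose some $l^+\in Q(z_{n-1}(a))$ satisfies $l^+\not\subset k$, and set $w\equiv l^+\Cup k\supsetneq k$. The goal is to exhibit $w$ as a future-outward null deformation of $k$ that is PNE on $\eth w\setminus\eth k$, contradicting the fact that the throat $k$ is PNC. For the ``future-outward'' claim, the generators of the lightsheet defining $l^+$ emanate from $\eth z_{n-1}(a)\subset\bar k$ into the future; those that exit $k$ must cross $\eth k\setminus\eth a$ (a future-marginal surface by the throat condition) and thereafter lie on future-directed null generators of $H^+(k')$, so $\eth w\setminus\eth k\subset H^+(k')$. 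For the PNE claim, property A in Definition~\ref{def:zig} together with persistence of nonexpansion (Eq.~\eqref{eq:persistence}) says that $l^+$ is antinormal, hence PNE, on $\eth l^+\setminus\eth z_{n-1}(a)$. Since $\eth w\setminus\eth k\subset\eth l^+\setminus\bar k\subset\eth l^+\setminus\eth z_{n-1}(a)$ and each such point lies strictly outside $\bar k$, the wedge $w$ locally coincides with $l^+$ in an open neighborhood, so PNE descends from $l^+$ to $w$.

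Together these two facts contradict PNC of $k$, closing the zig step. The zag step (even $n$) is identical with future/past and FNC/PNC interchanged, invoking the time-reversed property A of past lightsheet wedges in $Q$. Passing to the limit gives $z(a)=\lim_{n\to\infty} z_n(a)\subset k$, as required.

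The main obstacle is the verification that PNE is inherited locally from $l^+$ to $w=l^+\Cup k$ at each $p\in\eth w\setminus\eth k$. Because PNE is defined via the behavior of $\hmg$ under deformations within some open neighborhood $O$ of $p$, one must show that for $p\notin\bar k$ there is an $O\ni p$ in which $w$ and $l^+$ coincide. This should follow from $w=(l^{+\prime}\cap k')'$ together with the fact that $p\notin\bar k$ implies $k'\supset O$ for sufficiently small $O$, so that $l^{+\prime}\cap k'\cap O=l^{+\prime}\cap O$; one then needs to verify that the prime operation commutes suitably with restriction to $O$, a purely local causal-structure statement. A secondary subtlety is handling non-smooth edges, where Corollary~48 of Ref.~\cite{Bousso:2024iry} (already exploited in the preceding lemmas) supplies the closure of antinormality under wedge union at corners.
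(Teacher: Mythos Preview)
Your inductive skeleton (show $z_n(a)\subset k$ for all $n$) is fine and matches the paper's logic. The key technical step, however, is handled differently, and your version has a genuine gap.

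The paper does \emph{not} form the union $l^+\Cup k$. Instead it takes the \emph{intersection} $j\equiv k\cap z_N(a)$ and invokes Lemma~\ref{zlem:intersectFNC}: since $k$ (a throat) is PNC and $z_N(a)$ is PNC (Lemma~\ref{lem:pnc}), their intersection $j$ is PNC. Then $j$ is a PNC future lightsheet wedge of $z_{N-1}(a)$ properly contained in $z_N(a)\in Q(z_{N-1}(a))$, contradicting Property~B of Definition~\ref{def:zig}. Note that this uses the full zig $z_N(a)$, which \emph{is} PNC, rather than a generic $l^+\in Q$, which need not be.

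The gap in your union approach is the locality built into the noncontraction condition. PNC of $k$ asserts that \emph{some} open $O\supset\eth k$ admits no PNE future-outward null deformation supported in $O$. Your $w=l^+\Cup k$ is a single deformation that may well extend outside that particular $O$; exhibiting it as PNE on $\eth w\setminus\eth k$ does not contradict PNC. To negate PNC you must produce, for \emph{every} $O$, a PNE deformation supported in $O$, i.e.\ shrink $w$ toward $k$ while retaining PNE on the new edge. But persistence of nonexpansion (Eq.~\eqref{eq:persistence}) propagates FNE forward along future lightsheets (and PNE backward along past ones); it gives no control over PNE when you truncate a future lightsheet back toward $\eth k$. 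Nor can you simply replace $l^+$ by a shorter element of $Q(z_{n-1}(a))$: Property~A is asserted only at the terminal edge of each $l^+$, so a truncation need not lie in $Q$. This is precisely the subtlety that Lemma~\ref{zlem:intersectFNC} absorbs in the paper's argument (see the ``choose $O_c\subset a\Cup O_a$'' step of its proof), and which your approach leaves unaddressed.
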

\begin{proof}
    Suppose for contradiction that $k\supset a$ is a throat but $z(a)\not\subset k$. Then $H(k)\cap z(a)$ is not empty. Notice $\eth k\not\subset Z(a)$, otherwise we have an immediate contradiction with the definition of the zigzag. 
    
    Let $N$ be the smallest $i$ such that $Z_i(a)$ intersects $H(k)$ and such that $z_i(a)\not\subset k$. For odd (even) $N$, $z_N(a)$ is PNC (FNC). By Lemma~\ref{zlem:intersectFNC}, for $N$ odd (even)
    \begin{equation}
        j\equiv k\cap z_N(a)
    \end{equation}
    defines a PNC (FNC) proper subwedge of $z_N(a)$. This contradicts the definition of the zigzag.
\end{proof}
Our definition of the zigzag involved an arbitrary choice: starting from $a$, we first constructed a zig, then a zag, and so on. One could also have defined a ``zagzig'' by starting with a zag instead. At finite $n$ these sets will generically differ, but in the limit as $n\to\infty$ they define the same simple wedge:
\begin{cor}
    $z(a)=z^T(a)$, where $z^T(a)$ is the $n\to\infty$ limit of the ``zagzig'' $z_n^T(a)$ obtained by starting with a zag instead of a zig in Def.~\ref{def:zigzag}.
\end{cor}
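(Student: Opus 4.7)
The plan is to show $z(a)\subset z^T(a)$ and $z^T(a)\subset z(a)$ using Theorem~\ref{zthm:zigoutermost} in both directions. By Corollary~\ref{cor:zprop}, $z(a)$ is a throat accessible from $a$. If I can show that $z^T(a)$ is also a throat accessible from $a$, then Theorem~\ref{zthm:zigoutermost} immediately gives $z(a)\subset z^T(a)$. The reverse inclusion requires the analogue of Theorem~\ref{zthm:zigoutermost} for the zagzig. Both of these follow from the fact that the entire apparatus of definitions, lemmas, and theorems leading up to this corollary is manifestly symmetric under the exchange of past and future (equivalently, FNE $\leftrightarrow$ PNE, FNC $\leftrightarrow$ PNC, zig $\leftrightarrow$ zag). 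The underlying inputs---Discrete Max-Focusing (Conj.~\ref{conj:qfc}), Discrete Subadditivity (Theorem~\ref{zconj:SSA}), and the chain rule (Theorem~\ref{zthm:chainrule})---are all time-symmetric.

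Concretely, I would first verify that the zag $z^-(a)$ satisfies the time-reversed counterparts of the two lemmas proved for $z^+(a)$: namely $z^-(a)$ lies in the time-reversed class $Q^T(a)$, and $z^-(a)$ is FNC. These are immediate from the symmetric versions of the arguments already given. Next, I would reapply the proof of Theorem~\ref{zthm:accessible} verbatim to the sequence $z_n^T(a)$, reversing the roles of past and future lightsheet wedges at every step; the inductive chain of inequalities that reduces $\hmg[z_n^T(a)|h]\leq 0$ to $\hmg[h\Cup z_1^T(a)|a]\leq 0$ goes through unchanged, since Discrete Max-Focusing applies equally to past and future lightsheet wedges. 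This establishes that each $z_n^T(a)$ is accessible from $a$ via $Z_n^T(a)$, and hence so is the limit $z^T(a)$.

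I would then argue that $z^T(a)$ is a throat: neither another zag nor another zig could enlarge it, so $z^T(a)$ is both FNC and PNC. By the symmetric version of Theorem~\ref{zthm:zigoutermost} (whose proof again goes through after swapping odd/even with FNC/PNC), $z^T(a)$ is contained in every throat accessible from $a$. Applying this symmetric statement to $z(a)$ gives $z^T(a)\subset z(a)$, and applying the original Theorem~\ref{zthm:zigoutermost} to $z^T(a)$ gives $z(a)\subset z^T(a)$. Mutual containment yields $z(a)=z^T(a)$.

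The main (and really only) obstacle is ensuring that the time-symmetry argument is watertight: one must check that each ingredient used above---the definition of accessibility, Lemma~\ref{zlem:intersectFNC}, Corollary 48 of Ref.~\cite{Bousso:2024iry}, and the step in the proof of Theorem~\ref{zthm:zigoutermost} that invokes FNC/PNC alternation across layers of the zigzag---admits a verbatim time-reversed restatement. Once this is granted, the corollary is essentially a two-line consequence of the outermost property already established.
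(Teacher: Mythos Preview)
Your proposal is correct and takes essentially the same approach as the paper: both argue that, by time-reversal symmetry of the entire construction, Corollary~\ref{cor:zprop} and Theorem~\ref{zthm:zigoutermost} apply equally to $z^T(a)$, so that $z(a)$ and $z^T(a)$ are each throats accessible from $a$ and each contained in the other. The paper simply asserts this in one sentence (``It is easy to check\ldots''), whereas you spell out which ingredients need the symmetric restatement; the content is the same.
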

\begin{proof}
    It is easy to check that Corollary~\ref{cor:zprop} and Theorem~\ref{zthm:zigoutermost} apply to $z^T(a)$ as well. Thus both $z(a)$ and $z^T(a)$ are throats accessible from $a$, and each must be contained in the other.
\end{proof}

\subsection{Simple Wedge of a Boundary Subregion in AdS}
\label{sec:ads}

Given the historical importance of the AdS/CFT correspondence in the development of the notions of entanglement wedge and simple/outermost wedge, we want our definitions and results to accommodate this setting and reproduce its known results. Moreover, the zigzag construction equips the simple wedge with an additional structure, the preferred Cauchy slice $Z$. We would like to identify the analogous new structure in the original AdS/CFT setting.

This is accomplished as follows: let $B$ be an AdS boundary subregion, i.e., a wedge when viewed as a subset of conformal infinity $\scri$. One can also replace $\scri$ by any globally hyperbolic subset of $\scri$ such as a time-band. The causal wedge $c(B)$ is the double complement of $B$ in the bulk:
\begin{equation}\label{adsred}
    c(B) \equiv (B'_{\bar M} \cap M)'~.
\end{equation}
We may now define 
\begin{equation}
    z(B)\equiv z[c(B)]~.
\end{equation}
See Fig.~\ref{fig:adszigzag} for an illustration. 

Importantly, the causal wedge is always contained within the simple/outermost wedge~\cite{Wall:2012uf}, so the zigzag starts within the (traditionally defined) simple wedge. In other words, despite starting the zigzag already in the bulk, at the edge of $c(B)$, Eq.~\eqref{adsred} does not ``overshoot.'' 

Moreover, the required inclusion of $z[c(B)]$  in $\tilde c(B)'$ ensures that $z$ does not enter $I[c(\bar B)]$, where $\bar B$ is the complement wedge of $B$ on $\scri$. Hence the conformal boundary of $z[c(B)]$ will be $B$, so the usual homology condition is obeyed.

\begin{figure}[b]
    \centering
    \includegraphics[width=16cm]{./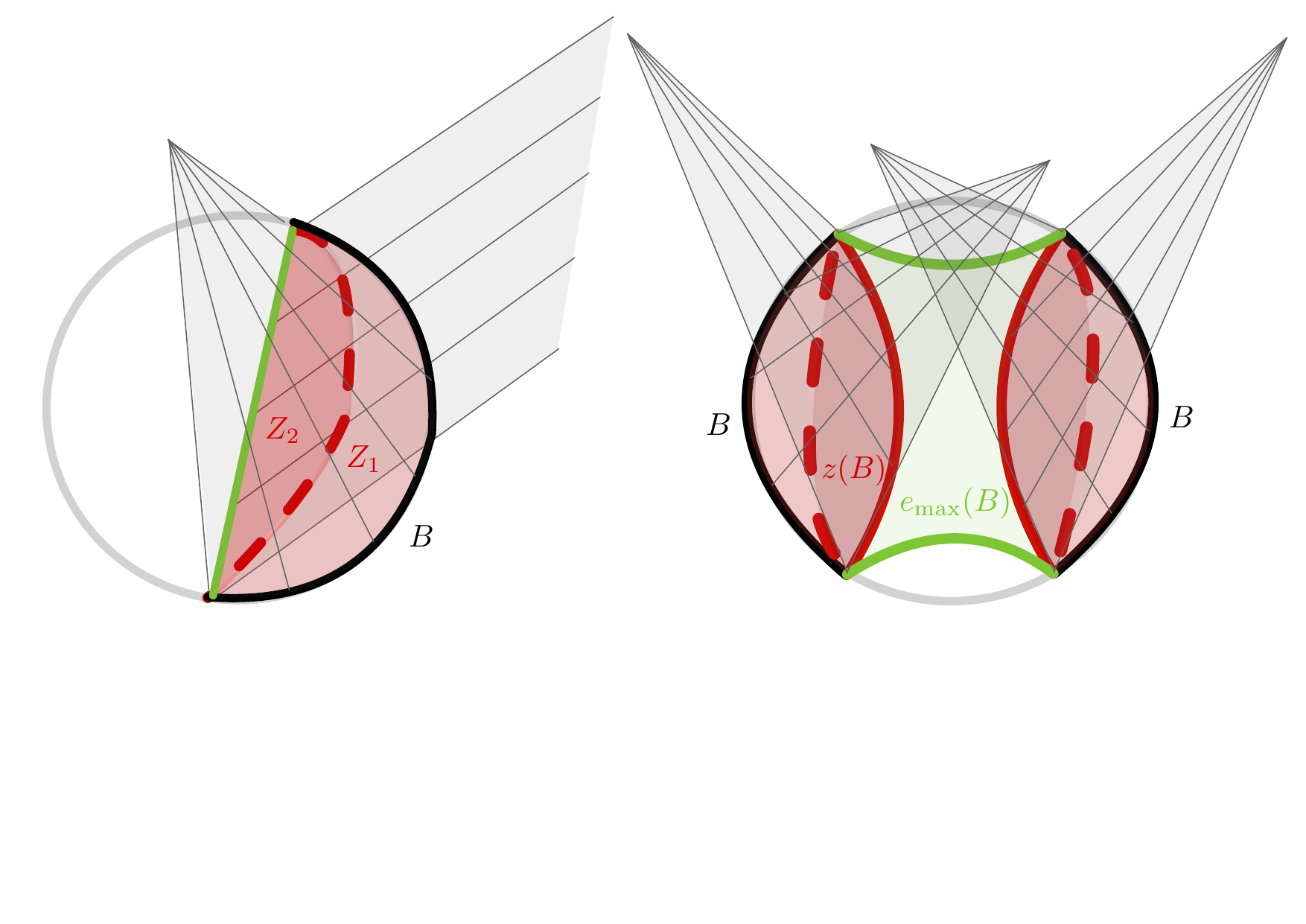}
    \vspace{-4.4cm}
    \caption{Simple wedges in 2+1 dimensional vacuum AdS. In both examples, $B$ is a subset of a very thin time-band on $\scri$,  idealized as a fixed time, so the causal wedge $c(B)$ is not shown. (See the grey wedges in Fig.~\ref{fig:adszigzag} for examples of $c(B)$.) \textit{Left}: $B$ is half of the boundary; $z(B)=\emax(B)$ is half of the bulk. The zig $Z_1$ is a portion of a light cone; $Z_2$ is a portion of the Rindler horizon. \textit{Right}: $B$ consists of two disconnected intervals. The simple wedge (red) is disconnected; the larger entanglement wedge (green) is connected.}
    \label{fig:intervals}
\end{figure}

Figure \ref{fig:intervals} shows two additional examples of the zigzag construction of the simple wedge of AdS boundary regions. 

\appendix
\section{Further examples of FNE, PNE, FNC, PNC, and marginal wedges}\label{app:examples}

In this appendix, we present further examples that illustrate the various types of discrete expansions in commonly considered spacetimes. This complements a first set of examples given in Fig.~\ref{fig:ads}.

\begin{figure}[H]
    \centering
    \vspace{.2cm}
    \includegraphics[width=16cm]{./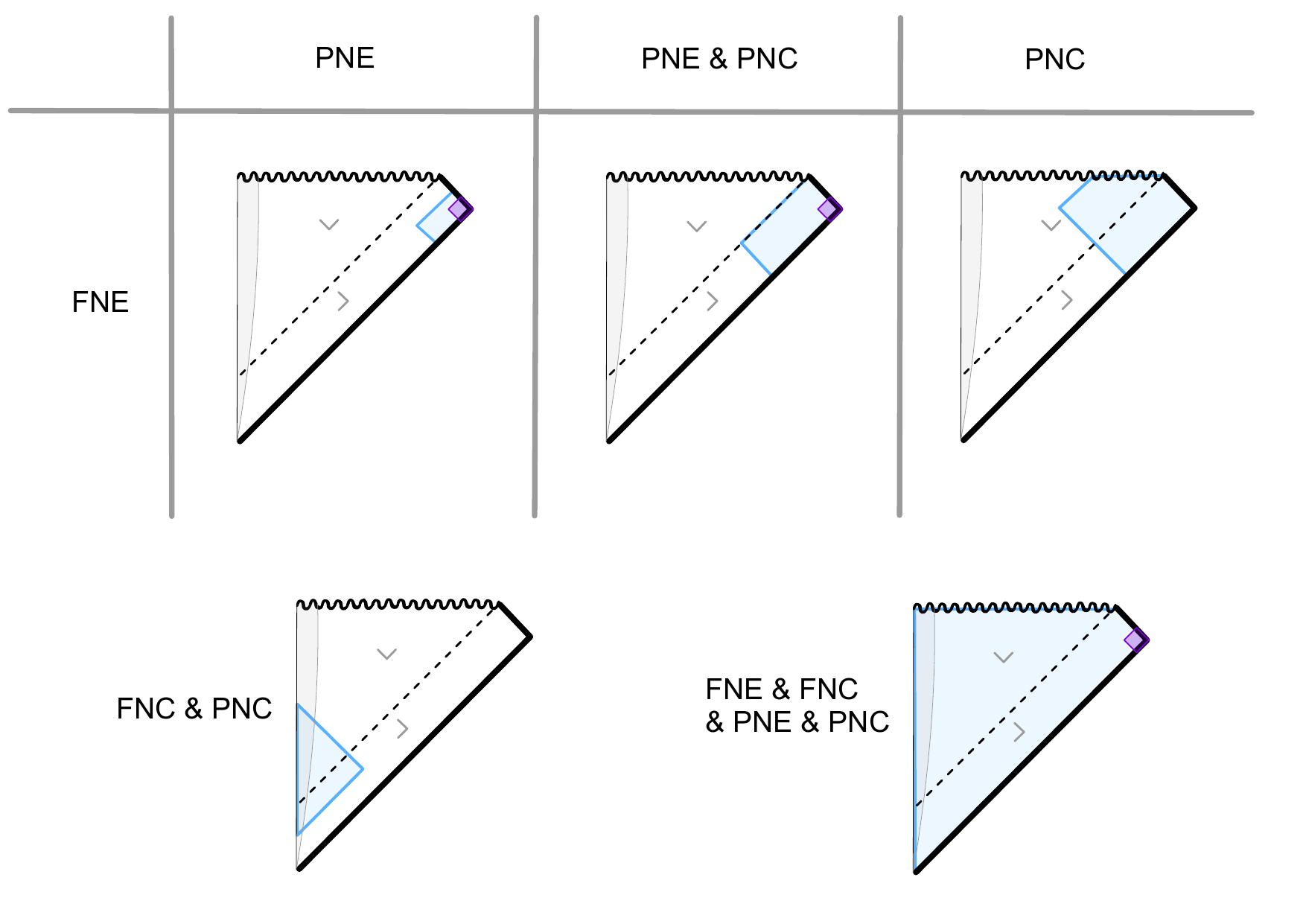}
    \vspace{-.7cm}
    \caption{Classical black hole formed from collapse in asymptotically flat space. Again various examples of wedges are shown, with some being accessible from a purple subwedge. (Past-marginally accessible in the middle top figure; throat accessible in the bottom right wedge.)}
    \label{fig:flat}
\end{figure}

\begin{figure}[H]
    \centering
    \vspace{-1.5cm}
    \includegraphics[width=14.5cm]{./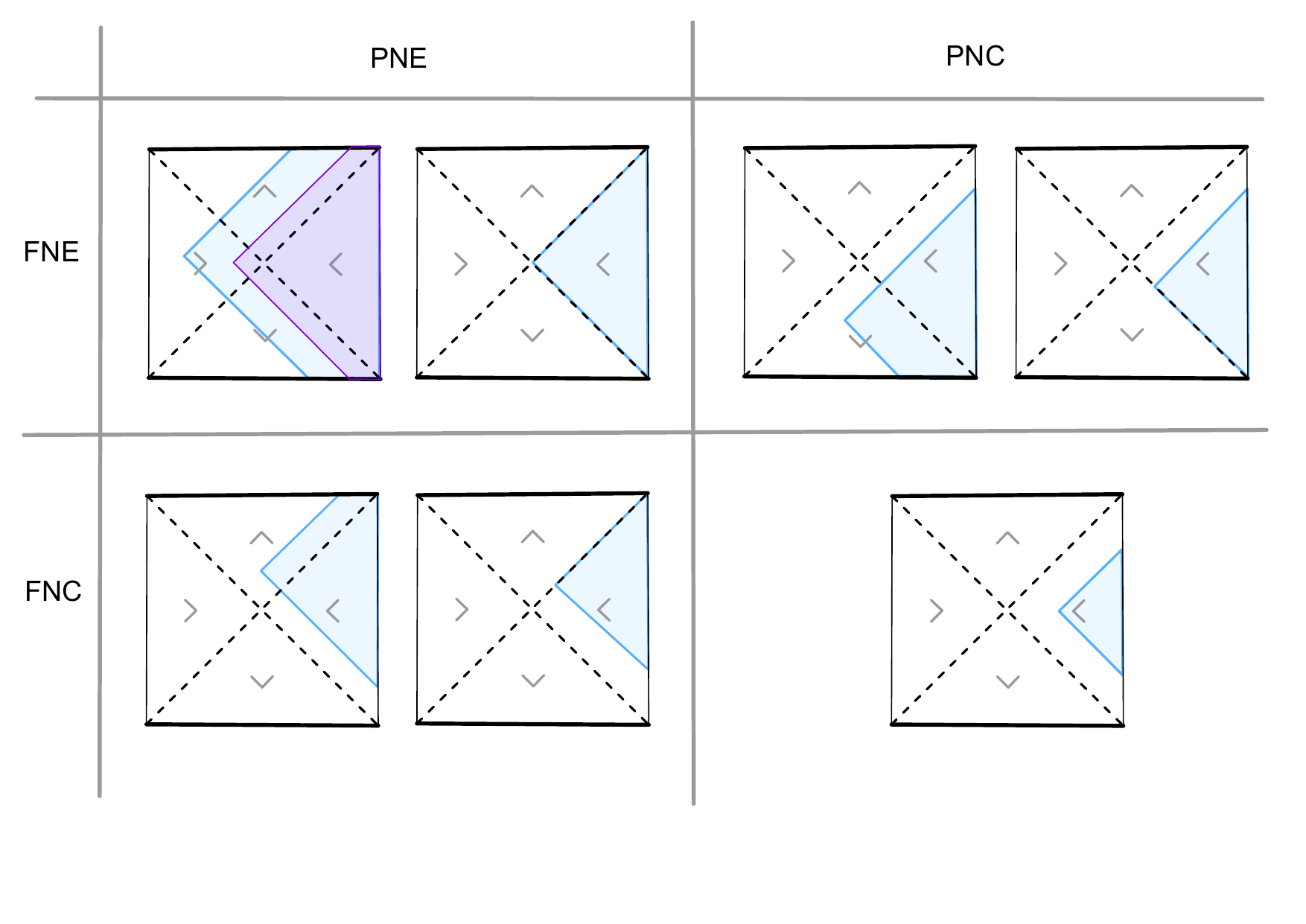}
    \vspace{-1.5cm}
    \caption{Vacuum de Sitter space. Except for the top left example, none of the blue wedges are accessible from any spherical proper subwedge.}
    \label{fig:ds}
\end{figure}

\begin{figure}[H]
    \centering
    \vspace{-.5cm}
    \includegraphics[width=14.5cm]{./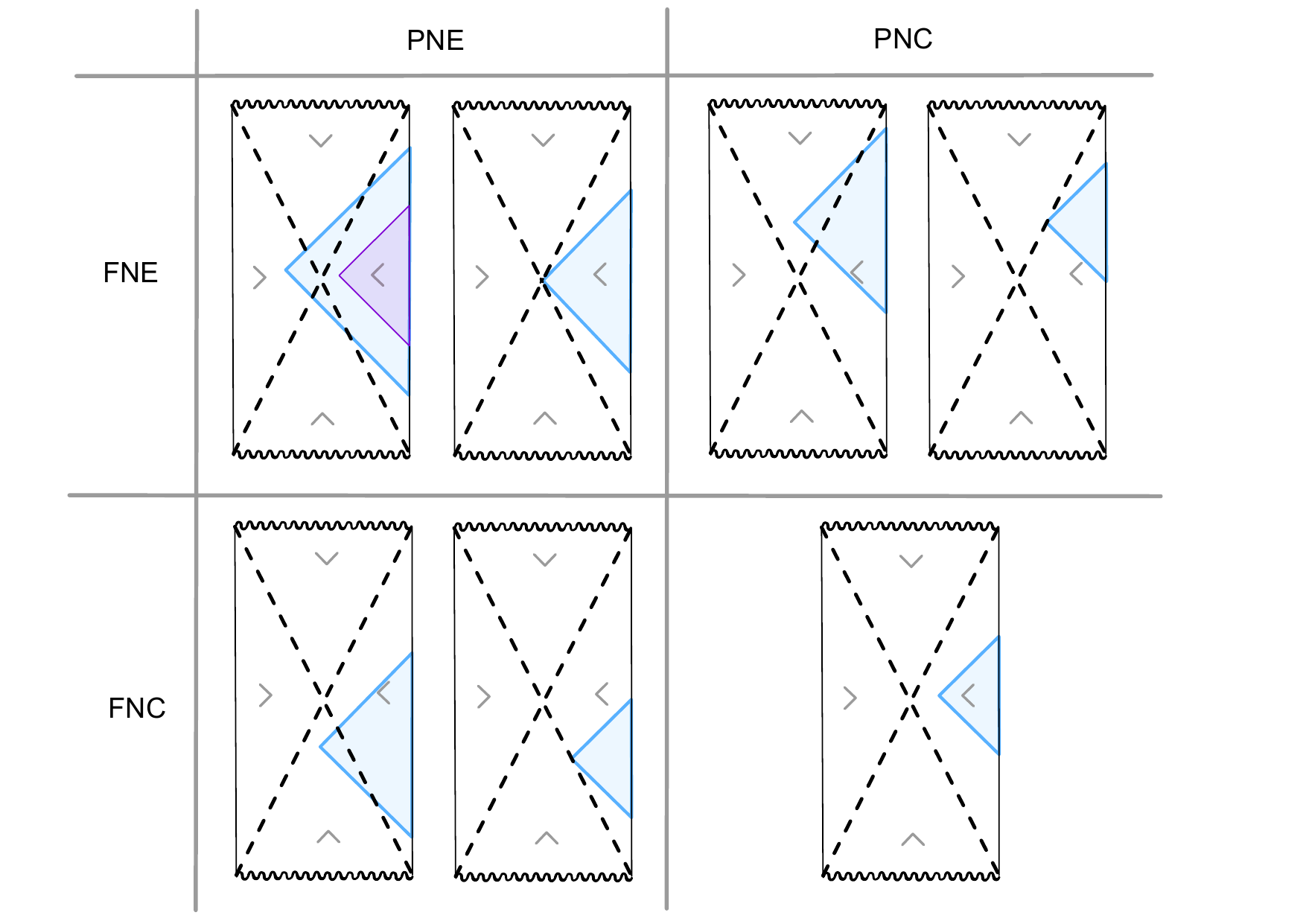}
    \vspace{-.3cm}
    \caption{Spatially closed FLRW universe dominated by pressureless dust. As in vacuum de Sitter, except for the top left example, none of the blue wedges are accessible from any proper subwedge.}
    \label{fig:flrw}
\end{figure}

\subsection*{Acknowledgements}
We would like to thank C.~Akers, N.~Engelhardt, G.~Lin, P.~Rath, A.~Shahbazi-Moghaddam, J.~Yeh, and especially S.~Kaya and G.~Penington for discussions. This work was supported by the Department of Energy, Office of Science, Office of High Energy Physics under QuantISED Award DE-SC0019380.

\bibliographystyle{JHEP}
\bibliography{covariant}

\end{document}